\documentclass[12 pt]{amsart}

\usepackage{xpatch} 
\usepackage{ifthen}

\usepackage[]{amsmath} 
\usepackage{amssymb}
\usepackage{mathtools} 
\usepackage{amsfonts}
\usepackage{mathrsfs}
\usepackage[cal=cm, scr=dutchcal, bb=libus]{mathalpha}

\usepackage{amsthm}
   \theoremstyle{plain} 
      
      \newtheorem{theorem}{Theorem}[section]
      \newtheorem{proposition}[theorem]{Proposition}
   \theoremstyle{definition}
      \newtheorem{lemma}[theorem]{Lemma}
      
      \newtheorem{definition}[theorem]{Definition}
      \newtheorem{construction}[theorem]{Construction}
      
   \theoremstyle{remark}
      \newtheorem{example}[theorem]{Example}
      \newtheorem{remark}[theorem]{Remark}
      

\usepackage{imakeidx}
\usepackage[style=alphabetic, sorting=nyt]{biblatex}
   \bibliography{main}
\usepackage{hyperref}
   \hypersetup{colorlinks=true, linkcolor=blue, citecolor=teal, urlcolor=teal,
   linktocpage, breaklinks=true}
\usepackage{xurl}

\usepackage{tikz-cd} 

\usepackage[euler]{textgreek} 
\usepackage{xcolor} 
   \definecolor{myred}{rgb}{0.5,0,0} 
   \definecolor{mygray}{rgb}{0.333,0.333,0.333} 
\usepackage{soul} 
\usepackage{lipsum} 

\makeatletter
   \renewcommand{\eqref}[1]{\textup{\eqreftagform@{\ref{#1}}}}
   \let\eqreftagform@\tagform@
   \def\tagform@#1{%
   \maketag@@@{\makebox[1sp][r]{\hspace{0em}
      (\ignorespaces#1\unskip\@@italiccorr)}}%
   }
\makeatother
\hyphenpenalty=0
\tolerance=10000

\usepackage[bold=0.3]{xfakebold} 

\newcommand{\nsd}[1]{_{\bar{#1}}} 
\newcommand{\nsu}[1]{^{\bar{#1}}} 
\newcommand\eps{\varepsilon}
\newcommand\kap{(\varepsilon - \ixi \psi)}
\makeatletter
\newcommand{\pushright}[1]{%
   \ifmeasuring@#1\else\omit\hfill$\displaystyle#1$\fi\ignorespaces}
\newcommand{\pushleft}[1]{%
   \ifmeasuring@#1\else\omit$\displaystyle#1$\hfill\fi\ignorespaces}
\makeatother
\newcommand{\term}[1]{{\bolder{#1}}}
\newcommand\esig{\underline \rho}

\newcommand{\cl}[1]{#1_{\textup{cl}}} \newcommand{\sheaf}[1][]%
   {\ifthenelse{\equal{#1}{}}{\mathcal O}{\mathcal O_{#1}}}
\newcommand{\ld}[1][]%
   {\ifthenelse{\equal{#1}{}}{\mathcal L}{\mathcal L_{#1}}}
\newcommand{\partdev}[1]{\frac{\partial}{\partial #1}}
\newcommand{\lvec}[1]%
   {\vphantom{#1}\smash{%
      \overset{\raisebox{-1 ex}{$\scriptscriptstyle\leftarrow$}}{#1}}}
\newcommand{\rvec}[1]%
   {\vphantom{#1}\smash{%
      \overset{\raisebox{-1 ex}{$\scriptscriptstyle\rightarrow$}}{#1}}}
\newcommand{\smooth}[1]{C^\infty\left(#1\right)}

\newcommand\M{\mathcal M}
\newcommand\MN{\mathcal N}

\newcommand{\intrange}[1]{[\![#1]\!]}
\newcommand\N{\mathbb{N}}
\newcommand\Z[1][]%
   {\ifthenelse{\equal{#1}{}}{\mathbb Z}{\mathbb Z_{#1}}}
\newcommand\R{\mathbb R}

\newcommand{\op}[1]{\textup{#1}}
\newcommand{\bolder}[1]{\setBold #1\unsetBold}
\newcommand{\bm}[1]{\boldsymbol{#1}}
\newcommand\sign[1]{(-1)^{#1}}

\newcommand{\Psia}[1][]{\ifthenelse{\equal{#1}{}}{\Psi}{\Psi^{(#1)}}}
\newcommand{\Psib}[1][]{\ifthenelse{\equal{#1}{}}{\Psi'}{\Psi'^{(#1)}}}


\newcommand{\quot}[2]%
   {\left.{\raisebox{.5 ex} {$#1$}}\middle/%
   {\raisebox{-.5 ex}{$#2$}}\right.}
\newcommand{\abs}[2][]{\ifthenelse{\equal{#1}{}}{\lvert #2\rvert}{\left\lvert #2\right\rvert}}
\newcommand{\ext}[2][]{#2^{\textup{ext}}_{\textup{\tiny{#1}}}}
\newcommand{\antif}{+}
\newcommand{\anti}[1]{ \vphantom{#1} {#1^{\antif}}}
\newcommand{\inv}[1]{\vphantom{#1} {#1^{-1}}}
\newcommand{\dc}[1]{\vphantom{#1} {\overline{#1}}}
\newcommand{\hc}[1]{\vphantom{#1} {#1^{\dagger}}}
\newcommand{\cc}[1]{\vphantom{#1} {#1^{\ast}}}

\newcommand\fgr{\mathcal{F}_{\textup{\tiny GR}}}
\newcommand\fbf{\mathcal{F}_{\textup{\tiny $BF$}}}
\newcommand\sgr{S_{\textup{\tiny GR}}}
\newcommand\sbf{S_{\textup{\tiny $BF$}}}

\newcommand\gbf{G_{\textup{\tiny $BF$}}}
\newcommand\egbf{G_{\textup{\tiny $BF$}}^{\textup{ext}}}

\newcommand\phibf{\Phi_{\textup{\tiny $BF$}}}
\newcommand\omgr{\omega_{\textup{\tiny GR}}}
\newcommand\ombf{\omega_{\textup{\tiny $BF$}}}
\newcommand\omgrf{\omega_{\textup{\tiny GR\textPhi}}}
\newcommand\ombff{\omega_{\textup{\tiny $BF$\textPhi}}}
\newcommand\tgr{\mathscr T_{\textup{\tiny GR}}}
\newcommand\tbf{\mathscr T_{\textup{\tiny $BF$}}}
\newcommand\tgrf{\mathscr T_{\textup{\tiny GR\textPhi}}}
\newcommand\tbff{\mathscr T_{\textup{\tiny $BF$\textPhi}}}

\newcommand\qgr{Q_{\textup{\tiny GR}}}
\newcommand\qbf{Q_{\textup{\tiny $BF$}}}
\newcommand\fgrf{\mathcal{F}_{\textup{\tiny GR\textPhi}}}
\newcommand\fbff{\mathcal{F}_{\textup{\tiny $BF$\textPhi}}}
\newcommand\sgrf{S_{\textup{\tiny GR\textPhi}}}
\newcommand\sbff{S_{\textup{\tiny $BF$\textPhi}}}
\newcommand\lgrf{\mathscr{L}_{\textup{\tiny GR\textPhi}}}
\newcommand\lbff{\mathscr{L}_{\textup{\tiny $BF$\textPhi}}}
\newcommand\gbff{G_{\textup{\tiny $BF$\textPhi}}}

\newcommand\phibff{\Phi_{\textup{\tiny $BF$\textPhi}}}

\newcommand\qgrf{Q_{\textup{\tiny GR\textPhi}}}
\newcommand\qbff{Q_{\textup{\tiny $BF$\textPhi}}}
\newcommand\ixi{\iota_{\xi}}
\newcommand\lxi{\mathcal{L}_{\xi}}

\newcommand\lxig{\mathcal{L}_{\xi}^{\Gamma}}
\def\Dg{D_\Gamma}

\newcommand\dego{\deg_{\Omega}}

\newcommand\gh{\textup{gh}\:}


\numberwithin{equation}{section}

\providecommand{\keywords}[1]
{
  \small	
  \textbf{\textit{Keywords---}} #1
}

\begin{document}

\begin{abstract}
   Three-dimensional supergravity in the Batalin--Vilkovisky formalism is constructed
    by showing that the theory including the Rarita--Schwinger term
    is equivalent to an AKSZ theory.
\end{abstract}

\title[3D Supergravity in the Batalin--Vilkovisky Formalism]{3D Supergravity in the Batalin--Vilkovisky Formalism}
\author[A. S. Cattaneo]{A. S. Cattaneo}
\author[N. Moshayedi]{N. Moshayedi}
\author[A. Smailovic Funcasta]{A. Smailovic Funcasta}
\address{Institut f\"ur Mathematik\\ Universit\"at Z\"urich\\ 
Winterthurerstrasse 190
CH-8057 Z\"urich}
\email[A.~S.~Cattaneo]{cattaneo@math.uzh.ch}
\address{Institut f\"ur Mathematik\\ Universit\"at Z\"urich\\ 
Winterthurerstrasse 190
CH-8057 Z\"urich}
\email[N.~Moshayedi]{nima.moshayedi@math.uzh.ch}
\address{Department of Physics\\ ETH Z\"urich\\ 
Otto-Stern-Weg 1
CH-8093 Z\"urich}
\email[A.~Smailovic Funcasta]{asmailovic@ethz.ch}
\date{December 18, 2024}
\keywords{Supergravity, BV formalism, AKSZ formalism, Chern–-Simons and $BF$ theory, Palatini--Cartan formalism}
\thanks{ASC acknowledges partial support of the SNF Grant No.~200021\_227719 and
of the Simons Collaboration on Global Categorical Symmetries. This research was
(partly) supported by the NCCR SwissMAP, funded by the Swiss National Science
Foundation. This article is based upon work from COST Action 21109 CaLISTA,
supported by COST (European Cooperation in Science and Technology)
(www.cost.eu), MSCA-2021-SE-01-101086123 CaLIGOLA, and MSCA-DN CaLiForNIA - 101119552.}
\maketitle

\tableofcontents

\section{Introduction}
\noindent
In this paper we construct the BV action \cite{Batalin1981} for supergravity in
three dimensions (see, e.g.\ \cite{ACHUCARRO198689,RuizRuiz:1996mm,Castellani:2016ibp,Andrianopoliquantum} and
references therein). 

Our strategy is to start considering the minimal coupling
of spin $\frac 32$ fermions to $BF$ theory (with Lie algebra $\mathfrak{so}(2,1)$), where
the BV action is easy to construct, since we realise
this as an AKSZ theory \cite{A06}. 

Three-dimensional $BF$ theory 
is on-shell
equivalent to gravity in the Palatini--Cartan formalism \cite{Witten:1988hc}, and
the off-shell equivalence in the BV formalism has been constructed in
\cite{A01}. Similarly, a super version of three-dimensional $BF$ theory 
is on-shell
equivalent to supergravity in the Palatini--Cartan formalism \cite{ACHUCARRO198689}.

The novel contribution of this paper is 
the extension of the 
BV transformation of \cite{A01} in the presence of spin $\frac 32$ fermions, see Theorem~\ref{th:BFFtoGRF}.
The advantage is that the rather involved BV structure of three-dimensional supergravity (for which we found no explicit expression in the literature) is obtained from the straightforward BV structure of super $BF$ theory via the AKSZ formulation. The main point of the transformation is that the AKSZ symmetries are expressed as covariant derivatives of ghosts which are Lie algebra valued $0$-forms, whereas in (super)gravity we want to see ghosts for (super)diffeomorphisms explicitly.

One can also observe that in our approach
supergravity is ``discovered,'' in the sense that the resulting BV operator
turns out to contain the local supersymmetry transformations. More precisely, switching to zero all the ghosts but for the fermionic ghost $\eps$, we get
\begin{align*}
    Qe &= 
   \dc \psi \, \rho \eps, \\
   Q\psi &= D\eps,
\end{align*}
where $e$ denotes the dreibein, $\psi$ the spin $\frac 32$ field, $\rho$ the spin representation, and $D$ the covariant derivative (note that the spin connection is not affected by this transformation). Moreover, the ghost for diffeomorphisms accordingly transforms with a term $ \frac 12 \dc \varepsilon \, \inv e (\rho)\varepsilon$, which encodes the fact that two supersymmetry transformations yield a translation.

The BV structure we obtain for supergravity in
three dimensions is considerably simpler than in four, where the BV action is known to require non-linear terms in the antifields (see \cite{BAULIEU1990387} for the half-shell formulation and \cite{cattaneofilarobattino}for the Palatini--Cartan formulation). One reason for this may be that three-dimensional gravity is topological and remains so when adding spin $\frac 32$ fermions.

In this note we focus on the Minkowski signature and on the 
spin $\frac 32$ Majorana representation. However, these results are readily generalised
to the Euclidean case (or to alternative signatures) and to other real 
representations
of $\mathfrak{spin}(2,1)$ or $\mathfrak{spin}(3)$, respectively.

\noindent
\subsection{Motivation} 
The Palatini--Cartan--Holst (PCH) formalism 
was conceived in order to bring
Einstein's general relativity closer to the language of gauge theories,%
\footnote{\ A general review of the Palatini formalism is found in \cite{A04},
one of the PCH formalism in \cite{Cattaneo2018}, and Holst's original work is
\cite{Holst1996}.} and based on the  work in \cite{A12} it was proven
in \cite{A01} that in three dimensions Palatini--Cartan (PC) gravity is strongly
equivalent to a $BF$ theory, result that serves as a basis for the current paper.
Here we will show that this strong equivalence persists even when we incorporate
the Rarita--Schwinger term to the action,\footnote{\ For an introduction to
supersymmetry, see \cite{A13}, and see \cite{A09} for an introduction to
supergravity, while the reader is referred to \cite{WessBagger} for a thorough
exposition of both.} which further allows us to define a Batalin--Vilkovisky
(BV) extension of 3D supergravity where both the invariance under spacetime
diffeomorphisms and the supersymmetry are explicitly encoded in their
corresponding ghost fields. 

Such result might serve as the starting point for the analysis of the boundary
structure of this theory through the BV--BFV formalism \cite{Cattaneo2014}
developed by Cattaneo, Mnev and Reshetikhin, hence joining a broader effort to
better understand how gravity coupled to different kinds of matter fields should
be described, as well as how to quantize such theories, even in the presence of
non-trivial boundaries. Recent papers in this line include
\cite{MR3908845,cattaneo2024gravitycoupledscalarsun, Cattaneo_2024, canepa2022boundarystructuregaugematter, Canepa_2024}, to 
name a few. A further development of the results of this paper would be its extension 
to all lower-dimensional strata, as was done for PC theory in \cite{MR4557605}, and the 
construction of its space of quantum states, which for Einstein--Hilbert theory was done in
\cite{canepa2024doublebfvquantisation3d}, leveraging the relation with PC theory.

Concerning the structure of the document, the second section consists of a short
summary of the BV formalism \cite{Batalin1981} and of the AKSZ procedure
\cite{A06}. The third section, in turn, presents BV and $BF$ gravity, and leads
to the fourth and last section, where we present supergravity and build its BV
extension. Besides these sections, the reader can find an appendix where we will
compile a series of results of graded geometry that we shall use in the rest of
the article. Sources reviewing graded geometry are the review \cite{A02} or the
books \cite{Enn2019, DeWitt}. For an approach to three-dimensional supergravity 
in the context of graded geometry, see also \cite{Castellani:2016ibp,Castellani:2020kmz}.
Lastly, we let
the reader know that this article is an improved adaptation of the
content in the master thesis of the third author \cite{ASF}.

\subsection{Notation}
\begin{itemize}
\item $\N$ denotes the set of natural numbers \emph{including} $0$.
\item $\intrange{m,n} = \{k\in\mathbb Z\ |\ m \leq k \leq n\}$ denotes a range
of integers.
\item Given a Lie group $G$, its associated Lie algebra is denoted by $\mathfrak
g$. 
\item All forms of products are left implicit and deduced from the context,
unless some ambiguity is present, e.g.\  $\lambda\cdot u\otimes v =: \lambda u
v$.
\item In the absence of parentheses, a derivation $D$ acts only on the element
directly adjacent to it: $aDbc \coloneqq aD(b)c$.
\item Einstein's summation convention for pairs of upper and lower indices is
generally assumed: $x^i y^i \neq x^i y_i \coloneq \sum_{i=j} x^i y_j$.
\item Indices with a bar on them are not summed over: $x\nsu \imath y\nsd \imath
\neq x^i y_i$.
\item $\delta^i_j$ denotes Kronecker's delta: $\delta^i_j = 1$ for $i=j$,
$\delta^i_j = 0$ otherwise.
\item In the context of gravity or special relativity, Greek letters
designate spacetime indices while Latin indices designate Lorentz bundle
indices.
\item The equivalence sign ``$\equiv$'' is used to designate equality \emph{on
shell}.
\item There are multiple fields associated to a field $\psi$:
   \begin{itemize}
      \item $\cc \psi$ is it complex conjugate,
      \item $\hc \psi$ is its Hermitian conjugate,
      \item $\dc \psi$ is its Dirac conjugate,
      \item $\anti \psi$ is its associated antifield.
   \end{itemize}
\end{itemize}

\subsection*{Acknowledgements} We thank G.~Canepa, P.~Grassi and M.~Schiavina for their useful comments. We also thank the anonymous referees for helpful comments. 

\section{Classical BV formalism}

\noindent
The path integral formalism for quantum field theory relies on the possibility
of integrating out the quadratic terms in the Lagrangian density defining the
action, which is achieved through a generalisation to field theory of the saddle
point method---known as \emph{Feynman--Laplace method}---%
requiring the critical points of the action to form a finite subset of its
support. However, precisely because of the ``continuity''---as opposed to
``discreteness''---of topological groups, in theories described by a Lagrangian
with gauge freedom one can smoothly deform a critical point into another
critical point, resulting in critical loci that are themselves submanifolds.
This spoils the applicability of the aforementioned method, making manifest the
need for machinery that selects discrete subsets of the critical locus of an
action. This is precisely the issue that both the BRST and the BV formalisms
address. 

We choose to employ the latter because it has a greater range of applicability
than the former, and also because of its relation to the
Batalin--Fradkin--Vilkovisky (BFV) formalism, which allows for a perturbative
quantisation of field theories on the possible boundary of a manifold. Despite
all this, though, we will not be concerned with any quantisation in this
document, so our treatment of the theory will be minimal; an exhaustive approach
is found e.g.\  in \cite{Mnev}, and a review of the BFV formalism is e.g.\  in
\cite{A02}. Let us then summarise the BV formalism.

\begin{definition}
Given a graded manifold $\M$, a \term{cohomological vector field} $Q$
\index{Cohomological!vector field} is an element $Q\in\mathfrak X(\M)$ such
that
   \begin{align}\label{eq:cohomologicalVector}
   &Q^2 = 0, &
   &\abs Q = 1, &
   &\gh Q = 1,
   \end{align}
where one understands $Q^2$ as $Q\circ Q$, being a map
$\smooth{\M}\to\smooth{\M}$. A manifold endowed with such a vector field is a
\term{dg manifold}.\footnote{\ Here \emph{dg} stands for \emph{differential
graded}.}
\index{dg manifold}
\end{definition}

\begin{remark}
Given that $Q$ is odd, saying that $Q^2 = 0$ is equivalent to saying that $[Q,Q]
= 0$. 
\end{remark}

\begin{example}
A paradigmatic example of dg manifold is given by the odd tangent bundle $T[1]M$
of any non-graded manifold $M$. If $(x^i)$ are coordinates on $M$ and
$(\theta^i)$ the coordinates of ghost number $1$ on its fibre, then a dg structure 
is given by 
   \begin{equation}
   Q = \theta^i\partdev{x^i}.
   \end{equation}
\end{example}

\begin{definition}
Given a vector field $X\in\mathfrak{X}(\M)$ and a form \mbox{$\omega \in
\Omega(\M)$}, we say that $\omega$ is \term{X-invariant} \index{Invariant form}
if
   \begin{equation}
   \ld_X \omega = 0.
   \end{equation}
\end{definition}

\begin{definition}
A \term{dg-symplectic manifold} \index{dg-symplectic manifold} is a graded 
symplectic manifold
with a cohomological vector field $Q$ under which the symplectic form $\omega$
is invariant. That is, 
   \begin{equation}
   \ld_Q \omega = 0.
   \end{equation}
\end{definition}

\begin{definition} \index{Hamiltonian!vector field} Given a graded symplectic
manifold $\M$ and a function $f\in\smooth{\M}$, the \term{Hamiltonian vector
field} associated to $f$ is the field $\rvec f \in \mathfrak X(\M)$ such that
   \begin{equation}
   \rvec f = \sign{\abs f + 1}\{f,\bullet\}.
   \end{equation}
\end{definition}

\begin{definition}
A \term{dg-Hamiltonian manifold of degree $k$} \index{dg-Hamiltonian!manifold}
$(\M, H, Q, \omega)$ is a dg-symplectic manifold $\M$ where the symplectic form
$\omega$ has ghost number $k$ and the cohomological vector field $Q$ is given as
the Hamiltonian vector field of some degree $k+1$ function $H$, called its
\term{Hamiltonian function}\index{Hamiltonian!function}:
   \begin{equation}
   Q = \sign{k} \{H,\bullet\}.
   \end{equation}
\end{definition}

\begin{definition}
A \term{classical BV theory} \index{Classical BV theory} is a dg-Hamiltonian
manifold of degree $-1$, that is, a tuple $(\mathcal F, S, Q, \omega)$ such that
   \begin{enumerate}
   \item $\mathcal F$ is a graded symplectic manifold, called the
   \term{field space}.\index{BV!field space}
   \item $S$ is an even function over $\mathcal F$ of degree $0$,
   called the \term{BV action}\index{BV!action}.
   \item $Q$ is the cohomological, Hamiltonian vector field of $S$.
   \item $\omega$ is a $Q$-invariant, odd symplectic form over $\mathcal F$ of
   degree $-1$.
   \end{enumerate}
Given that $\mathcal F$ is odd symplectic, we can locally associate each field
$\phi\in\mathcal F$ to another field $\anti\phi\in\mathcal F$, known as its
\index{Antifield} \term{antifield}, that by construction satisfies
   \begin{align}
   &\dego \anti\phi = n - \dego \phi, 
   &\gh \anti\phi = -\gh \phi - 1.
   \end{align}
\end{definition}

\begin{definition}
A BV theory $(\mathcal F, S, Q,\omega)$ is a \term{BV extension}
\index{BV!extension} of a classical theory described by a space of fields
$\cl{\mathcal F}$ and an action $\cl S$, if the ghost number zero part of
$\mathcal F$ and of $S$ correspond to $\cl{\mathcal F}$ and to $\cl S$, and if
the restriction $\left. Q\right|_{\cl{\mathcal F}}$ yields the gauge symmetries. 
\end{definition}

\begin{definition}
Two BV theories are \term{weakly equivalent} \index{Weak equivalence} if both 
of them are BV extensions of a same classical theory.\par
These theories will be \term{strongly equivalent} \index{Strong equivalence} if
there is a graded symplectomorphism $\Phi:\mathcal F\to\mathcal F'$ between
their respective field spaces that pulls the action of one theory back to the
action of the other:
   \begin{equation}
   \phi^*S' = S.
   \end{equation}
Such a symplectomorphism is known as a \index{Canonical!transformation}
\term{canonical transformation}.
\end{definition}
   
\begin{definition}
Let $(q,p)$ and $(q',p')$ be the respective even-odd coordinates of two
different BV field spaces $\mathcal F$ and $\mathcal F'$. A \index{Generating
function} \term{graded generating function of type $j$}, for
$j\in\intrange{1,4}$, is a graded function $G_j$ that we use to define two
coordinates among $q,p,q',p'$ as a function of the remaining two, in one of the
four following ways: \par 
   \begin{subequations} \label{eq:genFunctions}
   \begin{align}
   p &= \sign{\abs q+1} \frac{\partial G_1(q,q')}{\partial q}, &
   p' &= \sign{\abs {q'}} \frac{\partial G_1(q,q')}{\partial q'}, \\
   p &= \sign{\abs q+1} \frac{\partial G_2(q,p')}{\partial q}, &
   q' &= \sign{\abs {p'}} \frac{\partial G_2(q,p')}{\partial p'}, \\
   q &= \sign{\abs p+1} \frac{\partial G_3(p,q')}{\partial p}, &
   p' &= \sign{\abs {q'}} \frac{\partial G_3(p,q')}{\partial q'}, \\
   q &= \sign{\abs p+1} \frac{\partial G_4(p,p')}{\partial p}, &
   q' &= \sign{\abs {p'}} \frac{\partial G_4(p,p')}{\partial p'}.
   \end{align}
   \end{subequations}
\end{definition}

\begin{remark}
By design, generating functions have top cohomological degree and ghost number
$-1$.
\end{remark}

\begin{definition}
\label{def:mappingSpace} Given two graded manifolds $\M$, $\MN$, and letting
$\op{Mor}(\M, \mathcal N)$ be the manifold of grade-preserving morphisms
$\M\to\mathcal N$ in the category of graded manifolds,  the \term{mapping space}
\index{Mapping space} $\op{Map}(\M, \mathcal N)$ is the extension of
$\op{Mor}(\M, \mathcal N)$ that includes grade-shifting maps.
\end{definition}

\begin{remark}
If $\mathcal N$ is a graded vector space then
   \begin{equation}
   \op{Map}(\M, \mathcal N) = \smooth{\M}\otimes\mathcal N,
   \end{equation}
so locally the mapping space will have the form of such a tensor product of
graded spaces. Details of this definition can be found in \cite{Cattaneo2014}.
\end{remark}

\begin{definition}
An \term{AKSZ theory}\footnote{\ Here we focus on this special case of the more
general AKSZ method.} \term{in $n$ dimensions} \index{AKSZ!theory} $(M, \mathcal
N, H, Q, \alpha)$ is the combination of two things:
   \begin{enumerate}
   \item A \term{source} \index{AKSZ!source} consisting of a closed and oriented
   $n$-manifold $M$.
   \item A \term{target} \index{AKSZ!target} consisting of a dg-Hamiltonian
   manifold $(\mathcal N, H, Q, \omega)$ of degree $n-1$ whose symplectic form
   $\omega$ is exact: 
      \begin{equation}
      \omega = d_\MN \alpha 
      \end{equation}
   for $d_\MN$ the exterior derivative on $\mathcal N$. 
   \end{enumerate}
\end{definition}

\begin{definition}
Given an AKSZ theory $(M, \MN, H_\MN, Q_\MN, \alpha_\MN)$ in $n$ dimensions, we
define the \term{AKSZ fields space} \index{AKSZ!fields space}
   \begin{equation}
   \mathcal F = \op{Map}(T[1] M, \mathcal N).
   \end{equation}
Employing the notation in Remark \ref{rem:notationAKSZ}, we take the
evaluation map $\op{ev} : T[1] M \times \mathcal F \to \mathcal N$ and define,
for all $k \in \N$ and for coordinates $\xi$ on $\MN$ and $X$ on $\mathcal F$,
its pullback as
   \begin{equation}
   \op {ev}^* : \Omega^k(\MN) \to \Omega(M) \otimes \Omega^k(\mathcal F):
   \beta(\xi) \mapsto \widehat \beta(X),
   \end{equation}
and we define  the pushed forward projection
   \begin{equation}
   \begin{aligned}
   \pi_*:&\ \Omega(M) \otimes \Omega^k(\mathcal F)
   \ \to\ \Omega^k(\mathcal F), \\
   &\ \varphi \otimes \Phi\ \mapsto \int_M \varphi^{\text{top}} \otimes \Phi
   \quad \forall\ \varphi \in \Omega(M),\ \Phi \in \Omega^k(\mathcal F).
   \end{aligned}
   \end{equation}
\index{AKSZ!superfields}We construct coordinates $(X^i)$ on $\mathcal F$---the
so called \term{AKSZ super\-fields}---associated to the coordinates $(x^i)$ on
$\MN$ as
   \begin{equation}
   X^i = \op {ev}^* x^i,
   \end{equation}
and compose $\pi_*$ with $\op {ev}^*$ to produce the \term{transgression map}
\index{Transgression map} 
   \begin{equation}
   \mathcal T =\pi_*\op{ev}^*.
   \end{equation}
Letting $d_M$ be the exterior derivative on $M$, and further letting $\widetilde
d_M, \widetilde Q_\MN\in\mathfrak X(\mathcal F)$ be the respective lifts to
$\mathcal F$ of $d_M$ and of $Q_\MN$, we finally define the \term{AKSZ
construction} \index{AKSZ!construction} associated to this theory as the tuple
$(\mathcal F, S, Q, \omega)$ for
   \begin{enumerate}
   \item the \term{AKSZ action} 
   \begin{equation}
   S = \iota_{\widetilde d_M} \mathcal T \alpha_\MN + \mathcal T H_\MN,
   \end{equation}
   \item the \term{AKSZ vector field} \index{AKSZ!vector field} 
   \begin{equation}
   Q = \widetilde d_M + \widetilde Q_\MN,
   \end{equation}
   \item and the \term{AKSZ symplectic form} 
   \begin{equation}
   \omega = \mathcal T\omega_\MN.
   \end{equation}
   \end{enumerate}
\end{definition}

\begin{remark} \label{rem:notationAKSZ} Given a form $\beta \in
\Omega(\MN)$, and coordinates $x$ over $\MN$ and $X$ over $\mathcal F$, by
$\widehat \beta(X)$ one understands the coordinates expression for $\beta(x)$, but
symbolically replacing $x$ by $X$. \par%
Given any manifold $M$ and form $\varphi \in \Omega(M)$, $\varphi^{\text{top}}$
denotes the components of $\varphi$ with top cohomological degree, that is,
those components such that $\deg_{\Omega(M)} \varphi^{\text{top}} = \dim M$.
\end{remark}

\begin{remark}
The maps $\pi_*$ and $\mathcal T$ are graded: 
   \begin{equation}
   \abs{\pi_*} = \abs{\mathcal T} = -\dim M.
   \end{equation}
\end{remark}

\begin{remark} \label{rem:localAKSZ}
By Definition \ref{def:mappingSpace}, locally  
   \begin{equation}
   \mathcal F \cong \Omega(M)\otimes \MN,
   \end{equation}
so in practice we can write $S$, $Q$ and $\omega$ explicitly:
   \begin{align}
   S &= \int_M \Bigl( (\widehat \alpha_\MN)_i(X)\: d_M X^i + \widehat H_\MN(X) \Bigr), \\
   Q &= \int_M \Bigl( d_M X^i + \widehat Q_\MN^i(X) \Bigr)\ \partdev{X^i}, \\
   \omega &= \sign{n}\int_M (\widehat \omega_\MN)_{ij}(X)\: d_\MN X^i \wedge d_\MN X^j.
   \end{align}
Given this, together with the fact that each superfield $X^i$ can be decomposed
over summands $\{X_{(j)}^i\}_j$ of definite cohomological degree
$j\in\intrange{0,n}$, at the end of the day the AKSZ action is the action that
we would obtain by 
   \begin{enumerate}
   \item symbolically replacing the original fields with their associated
   super\-fields,%
   \item expanding those in components of definite cohomological degree,
   \item keeping only the terms which have top cohomological degree.
   \end{enumerate}
An example of such construction is given in the next section, so let us proceed
without further ado.
\end{remark}


\section{3D gravity in vacuum and $BF$ theory}

\begin{definition}
An $n$-dimensional \index{Spacetime} \term{spacetime} is a closed
$n$-manifold with a \index{Mostly positive metric} \term{mostly positive}
metric, that is, of signature $(p = n-1, n = 1)$.
\end{definition}

\begin{definition}
Given a principal $\op{SO}(n-1,1)$-bundle $P$ over an $n$-dimensional manifold,
its \index{Mikowski!bundle} \term{Minkowski bundle} is the associated vector
bundle $(\mathcal V, \eta)$ with typical fibre $\R^n$, endowed with the
\index{Minkowski!metric} \term{Minkowski metric}
   \begin{equation}
   \eta \coloneq (-\mathbb I_{1}) \oplus \mathbb I_{(n-1)},%
   \end{equation}
for $\mathbb I_{k}$ the identity matrix in $k$ dimensions. 
\end{definition}

\begin{remark}
Here we will focus on the case where $n=3$, so 
   \begin{equation}
   \eta = \op{diag}(-1,1,1).
   \end{equation}
\end{remark}

\begin{definition} \index{Cotriad} \index{Triad} \index{Coframe field}
\index{Frame field} Given a Minkowski bundle $(\mathcal V,\eta)$, a
\term{cotriad} or \term{coframe field} over a $3$-spacetime $M$ is a
non-degenerate $1$-form $e$ over $M$ valued in $\mathcal V^{\wedge 1}$.
Associated to it there is a \term{triad} or \term{frame field} $\inv e$, which
is its inverse in the following sense:
   \begin{align}
   & e \in \Omega^1(M, \mathcal V^{\wedge 1}), &
   & \inv e \in \Omega^1(M, \mathcal V^{\wedge 1})^*, &
   & \inv e(e) = 1.
   \end{align}
\end{definition}

\begin{remark}
We talk of $\mathcal V^{\wedge 1}$ and not simply of $\mathcal V$ because we
associate to the (co-)triads a multivector degree
   \begin{equation}
   \deg_{\mathcal V} e = - \deg_{\mathcal V} \inv e = 1.
   \end{equation}
From now on, we will always conceive $\mathcal V$ as $\mathcal V^{\wedge 1}$.
\end{remark}

\begin{definition}
The \term{Palatini--Cartan formalism} in three dimensions, \index{PC
gravity} or simply \index{3D!gravity} \term{3D gravity}, consists of
   \begin{enumerate}
   \item an oriented spacetime $M$, assumed to have no boundary,
   \item a principal $\op{SO}(2,1)$-bundle $P \to M$, with Minkowski bundle
   $\mathcal V$,
   \item a possibly non-zero \index{Cosmological constant} \term{cosmological
   constant} $\Lambda\in\R$,
   \end{enumerate}
together with a field content given by
   \begin{enumerate}
   \item[4.] a {cotriad} $e \in \Omega^1(M, \mathcal V)$ over $M$,
   \item[5.] a connection $1$-form $\Gamma\in\op{Conn}(P) \cong \Omega^1 (M, \mathcal
   V^{\wedge 2})$ over $P$.
   \end{enumerate}
With these we define the classical field space
   \begin{equation}
   \fgr^0 = \Omega^1_{\textup{nd}}(M, \mathcal V)\times \op{Conn}(P)\ 
   \footnote{\ Where ``nd'' stands for \emph{non-degenerate}.}
   \end{equation}
and the classical action
   \begin{equation}
   \sgr^0(\Lambda) = 
   \int_M \left\langle e \wedge F_{\Gamma} 
   + \frac \Lambda 6 e^{\wedge 3} \right\rangle,
   \end{equation}
where the angle brackets $\langle\bullet\rangle$ designate the appropriate
contraction of any indices other than those over $\Omega(M)$.%
\footnote{\ The expression within the brackets actually defines a density, which
could be integrated over a non-orientable manifold, yet for simplicity we
assume orientability.}
\end{definition}

\begin{remark}
The reason why we can say that connections take values in $\mathcal V^{\wedge
2}$ follows from the fact that, since we work in three dimensions, $V^{\wedge 2}
\cong \mathfrak{so}(2,1)$ as a Lie algebra.  
\end{remark}

\begin{remark}
For convenience and when it is possible, we often express a particular action
$S_i$ as the integral of a top form $\langle \mathscr L_i \rangle$:
   \begin{equation}
   S_i = \int_M \Bigl\langle \mathscr L_i \Bigr\rangle. 
   \end{equation}
\end{remark}

\begin{remark}
By \emph{appropriate contractions} one understands that ultimately $\sgr^0$ must
be a scalar, so any internal index must be contracted. This is to say that we
must correctly ``trace over'' the multivector indices that both $e$ and
$F_\Gamma$ have over $\mathcal V$, leading specifically to
   \begin{align}
   &\bigl\langle e \wedge F_\Gamma \bigr\rangle 
   = \epsilon_{abc}\ e^a \wedge F_\Gamma^{bc},
   &\bigl\langle e^{\wedge 3} \bigr\rangle 
   = \epsilon_{abc}\ e^a \wedge e^b \wedge e^c,
   \end{align}
for the $3$-dimensional Levi--Civita symbol $\epsilon_{abc}$ and leaving implicit
the decomposition over the generators of $\Omega (M)$.
\end{remark}

\begin{definition}
Given a connection $1$-form $\Gamma$ and its associated covariant derivative
$D_\Gamma$, the \index{Lie!covariant derivative} \term{Lie covariant
derivative} $\lxi^\Gamma$ with respect to a vector field $\xi$ is defined as
   \begin{equation}
   \lxi^\Gamma = [\ixi, D_\Gamma].
   \end{equation}
\end{definition}

\begin{construction}[BV extension of 3D gravity] \label{construct:3DGravity}
Based on the data from 3D gravity, let $\fgr$ be the field space
   \begin{equation}
   \fgr = T^*[-1]\bigl( \fgr^0 \times \mathfrak X(M)[1] 
   \times \Omega^0(M, \op{ad}(P)[1]) \bigr),
   \end{equation}
where $\mathfrak X(M)[1]$ is the space of vector fields over $M$ with shifted
degree and $\op{ad}(P)[1]$ stands for the degree-shifted adjoint bundle of $P$.
On the base of this shifted cotangent bundle we denote the fields as
   \begin{equation}
   \bigl((e, \Gamma), \xi, \chi \bigr) 
   \in \fgr^0 \times \mathfrak X(M)[1] 
   \times \Omega^0\bigl(M, \op{ad}(P)[1]\bigr),
   \end{equation}
where $\xi$ encodes the diffeomorphism invariance of general relativity and
$\chi$ is the ghost field required for the internal gauge transformations. In
turn, the fibre includes their associated antifields, and the fact that $\fgr$
is a shifted cotangent bundle allows us to define the canonical symplectic form
   \begin{equation}
   \omgr = \int_M \Bigl( 
   \delta  e \delta \anti e 
   + \delta  \Gamma \delta \anti \Gamma 
   + \delta  \xi \delta \anti \xi 
   + \delta  \chi \delta \anti \chi 
   \Bigr)
   \end{equation}
Finally, we can define the action $\sgr(\Lambda) \coloneq \sgr^0(\Lambda) + \sgr^1$
with
   \begin{multline}
   \sgr^1 = \int_M \Bigl\langle 
      \anti e \bigl(\lxi^\Gamma e + [\chi, e]\bigr)
      + \anti\Gamma \bigl(D_\Gamma \chi + \ixi F_\Gamma\bigr) \\
      + \frac 12 \iota_{[\xi,\xi]}\anti\xi
      + \frac 12 \anti \chi \bigl([\chi, \chi] - \ixi^2 F_\Gamma \bigr)
   \Bigr\rangle,
   \end{multline}
and its associated cohomological vector field $\qgr$, which acts as
   \begin{subequations}
   \begin{align}
   \qgr (e) &= \lxi^\Gamma e + [\chi, e], \\
   \qgr (\Gamma) &=  D_\Gamma \chi + \ixi F_\Gamma, \\
   \qgr (\xi) &= \frac 12 [\xi, \xi] \\
   \qgr (\chi) &= \frac 12 \bigl([\chi, \chi] - \ixi^2 F_\Gamma\bigr).
   \end{align}
   \end{subequations}
We then write $\tgr \coloneq (\fgr, \sgr, \qgr, \omgr)$ to refer to this theory.
\end{construction}

\begin{proposition}
The tuple $\tgr$ is a BV extension of 3D gravity, which we shall call
$3$-dimensional \index{BV!gravity} \term{BV gravity}.
\end{proposition}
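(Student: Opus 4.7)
The plan is to verify each clause of the definition of a BV extension in turn. The field space $\fgr$ is a shifted cotangent bundle $T^*[-1]$ of a graded manifold, so it is automatically graded symplectic with the canonical odd symplectic form $\omgr$ of ghost number $-1$, and the internal degree shifts $\mathfrak X(M)[1]$ and $\op{ad}(P)[1]$ on the base yield exactly the stated antifield ghost assignments $\gh\anti\phi = -\gh\phi - 1$. A direct degree count confirms that $\sgr^0(\Lambda)$ is even of degree $0$; and each term in $\sgr^1$ pairs an antifield of ghost $-k-1$ with an expression built from the ghosts and classical fields of total ghost $k+1$, so the full action $\sgr$ is even and of total ghost $0$. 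The classical reduction is immediate: setting $\xi$, $\chi$ and every antifield to zero restricts $\fgr$ to $\fgr^0$ and kills $\sgr^1$ identically, recovering $(\fgr^0, \sgr^0)$.

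Next, I would verify the Hamiltonian condition $\qgr = \sign{-1}\{\sgr,\bullet\}$ by computing the Poisson bracket of $\sgr$ with each coordinate. Only $\sgr^1$ contributes in these directions, and differentiating it with respect to $\anti e$, $\anti\Gamma$, $\anti\xi$, $\anti\chi$ reproduces exactly the four declared formulas for $\qgr(e)$, $\qgr(\Gamma)$, $\qgr(\xi)$, $\qgr(\chi)$. Restricting $\qgr$ to $\fgr^0$ and reading $\xi$, $\chi$ as infinitesimal gauge parameters, one reads off the internal $\op{SO}(2,1)$ action $[\chi,e]$, $D_\Gamma\chi$ together with the covariant diffeomorphism action $\lxi^\Gamma e$, $\ixi F_\Gamma$, which are the standard gauge symmetries of Palatini--Cartan gravity in three dimensions.

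The main obstacle is the cohomological condition $\qgr^2=0$, equivalently the classical master equation $\{\sgr,\sgr\}=0$. I would check this field by field. On $\xi$ it reduces to the graded Jacobi identity for vector fields. On $\chi$ it combines the Jacobi identity in $\mathcal V^{\wedge 2}\cong\mathfrak{so}(2,1)$, the curvature identity $D_\Gamma^2=[F_\Gamma,\bullet]$, and the Bianchi identity $D_\Gamma F_\Gamma=0$. For $\qgr^2(e)$ and $\qgr^2(\Gamma)$ one needs in addition the Cartan-style identities relating $\lxi^\Gamma$, $D_\Gamma$, $\ixi$ and $F_\Gamma$, most importantly that $[\lxi^\Gamma,D_\Gamma]$ and $(\lxi^\Gamma)^2$ pick up curvature anomalies proportional to $\ixi F_\Gamma$ and $\ixi^2 F_\Gamma$. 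These anomalies are exactly what the $\ixi F_\Gamma$ term in $\qgr(\Gamma)$ and the $\frac 12 \ixi^2 F_\Gamma$ term in $\qgr(\chi)$ are designed to cancel, and once they are in place the master equation closes. The computation is mechanical but lengthy; I would carry it out on each of the four fields separately, invoking the Bianchi identity and the Jacobi identities at the appropriate places rather than expanding every term by hand.
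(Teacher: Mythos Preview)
The paper does not prove this proposition directly: it is stated without argument and is effectively justified a posteriori by Proposition~\ref{prop:3DBFgravity}, imported from~\cite{A01}. There, $BF$ gravity is realised as an AKSZ theory---hence automatically a BV theory---and the canonical transformation $\phibf$ generated by~\eqref{eq:genFct0} pulls $\sbf$ back to $\sgr$; the BV axioms for $\tgr$ are then inherited from $\tbf$. Your route is a genuine alternative: a direct, self-contained verification of the axioms that never invokes the $BF$ comparison, at the cost of the Cartan--Bianchi--Jacobi bookkeeping you outline. Both are valid; the paper's is shorter once~\cite{A01} is granted, yours is more transparent about why the specific $\ixi F_\Gamma$ and $\ixi^2 F_\Gamma$ corrections are forced.

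One point to tighten. You identify $\qgr^2=0$ with the classical master equation and then check $\qgr^2$ only on the four base coordinates $e,\Gamma,\xi,\chi$. Since $\sgr$ is \emph{linear} in the antifields, $\{\sgr,\sgr\}=0$ decomposes into two independent conditions: $\{\sgr^1,\sgr^1\}=0$, which is precisely your $\qgr^2=0$ on the base fields, and $\{\sgr^0,\sgr^1\}=0$, which says that $\qgr$ annihilates the classical action $\sgr^0(\Lambda)$. The second condition is not implied by the first and must be checked separately (equivalently, you could check $\qgr^2$ on the antifields as well). It is of course easy---it is the invariance of the Palatini--Cartan action under internal Lorentz rotations and covariant diffeomorphisms, using $D_\Gamma F_\Gamma=0$ and integration by parts---but it should appear explicitly in your argument.
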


\begin{definition}
A \index{$BF$!theory} \term{$BF$ theory} in $n\geq 2$ dimensions is defined as a
pair $(\mathcal F, S)$ where, given
   \begin{enumerate}
   \item an $n$-dimensional oriented manifold $M$,
   \item a finite dimensional Lie group $G$ and a $G$-bundle $P$ over $M$,
   \item a form $B\in\Omega^{n-2}(M, \op{ad}^*(P))$ valued in the coadjoint
   bundle,
   \item a connection form $A\in\op{Conn}(P)$,
   \end{enumerate}
one sets the field space to be
   \begin{equation}
   \mathcal F = \op{ad}^*(P)\times \op{Conn}(P)
   \end{equation}   
and the action to be
   \begin{equation}
   S = \int_M \langle B, F_A\rangle,
   \end{equation}
where $F_A$ is the curvature of $A$ and $\langle\bullet,\bullet\rangle$ is the
pairing of dual maps.
\end{definition}

\begin{definition}
We define \index{$BF$!gravity} \term{$BF$ gravity} as the $BF$ theory $(\fbf^0,
\sbf^0)$ where $n=3$, and where $P$ is the $\op{SO}(2,1)$-bundle over $M$ with
associated Minkowsky bundle $\mathcal V$ for some reference Lorentzian metric.

Moreover, for some reference connection $A'$ we interpret $B$ and $A-A'$ as
$1$-forms valued in $\mathcal V^*$ and $\mathcal V^{\wedge 2}$ respectively, and
further use the internal Minkowski metric to identify $\mathcal V^*$ with
$\mathcal V$, hence seeing $B$ as taking values in the latter. Thus the field
space becomes 
   \begin{equation}
   \fbf^0 \cong \Omega^1(M, \mathcal V)
   \times \Omega^1(M,\mathcal V^{\wedge 2}).
   \end{equation}
Finally, we incorporate a cosmological term in the action, resulting in
   \begin{equation}
   \sbf^0(\Lambda) = \int_M \bigl\langle BF_A 
   + \frac \Lambda 6 B^{\wedge 3}\bigr\rangle,
   \end{equation}
where $\langle \bullet \rangle$ denotes again the trace over the internal
indices.
\end{definition}

\begin{proposition}[\cite{A01}]\label{prop:3DBFgravity} $BF$ gravity is an AKSZ
theory and its AKSZ extension $\tbf\coloneq(\fbf, \sbf, \qbf, \ombf)$ is strongly
equivalent to $3$-dimensional BV gravity. A canonical transformation
   $\phibf: \sgr\to\sbf$
is provided by the following type $2$ generating function:
   \begin{multline}
   \label{eq:genFct0}
   G(q, p')_{\text{$BF$}}  =  
   - \anti B (e - \ixi \anti \Gamma -\frac 12 \ixi^2 \anti \chi)
   - A \anti \Gamma \\ 
   - \anti \tau(-\ixi e + \frac 12 \ixi^2 \anti \Gamma 
      - \frac 13 \ixi^3 \anti \chi)
   - c \anti \chi
   \end{multline}
for $q \coloneq (e, \anti \Gamma, \xi, \anti \chi)$
and $p' \coloneq (\anti B, A, \anti \tau, c)$.
\end{proposition}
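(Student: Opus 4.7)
The statement has three parts, which I would address in order.

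\emph{AKSZ realisation.} First I would exhibit $BF$ gravity as an AKSZ theory. Since the source is three-dimensional, I need a target dg-Hamiltonian manifold of degree $2$. Using the isomorphism $\mathcal V^{\wedge 2}\cong\mathfrak{so}(2,1)$ and the internal Minkowski metric to identify $\mathcal V$ with $\mathcal V^*$, the natural target is $\MN=\mathcal V^{\wedge 2}[1]\oplus\mathcal V[1]$ with coordinates $(\gamma,\beta)$, symplectic potential $\alpha_\MN=\langle\beta,d\gamma\rangle$, and Hamiltonian
\begin{equation*}
H_\MN=\bigl\langle\tfrac{1}{2}\beta[\gamma,\gamma]+\tfrac{\Lambda}{6}\beta^{\wedge 3}\bigr\rangle.
\end{equation*}
The Jacobi identity in $\mathfrak{so}(2,1)$ together with the cyclicity of the trace gives $\{H_\MN,H_\MN\}=0$, so this defines a bona fide dg-Hamiltonian manifold. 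Applying Remark~\ref{rem:localAKSZ}, each superfield decomposes into form-components of definite cohomological degree,
$\boldsymbol\gamma=c+A+\anti B+\anti\tau$ and $\boldsymbol\beta=\tau+B+\anti A+\anti c$,
whose ghost numbers are recovered from the shifts. Expanding $S=\iota_{\widetilde d_M}\mathcal T\alpha_\MN+\mathcal T H_\MN$ and keeping only top-form components reproduces $\sbf^0(\Lambda)$ at ghost number zero, together with the standard BV completion; this is $\tbf$.

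\emph{Canonical transformation from the generating function.} Next I would apply the type-2 formulas \eqref{eq:genFunctions} to the proposed $G_{BF}$. Differentiating in $\anti B,A,\anti\tau,c$ yields the expressions
\begin{align*}
B&=-(e-\ixi\anti\Gamma-\tfrac12\ixi^2\anti\chi), & \anti A&=-\anti\Gamma,\\
\tau&=\ixi e-\tfrac12\ixi^2\anti\Gamma+\tfrac13\ixi^3\anti\chi, & \anti c&=-\anti\chi,
\end{align*}
while differentiating in $e,\anti\Gamma,\xi,\anti\chi$ produces the dual expressions giving $\anti e, A, \anti\xi, c$ as functions of the GR data and $\anti B,\anti\tau$. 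The non-degeneracy of the dreibein $e$ ensures that these relations can be inverted to obtain a globally defined graded diffeomorphism, and the symmetry of the mixed second derivatives of a type-2 generating function automatically makes the resulting map a symplectomorphism with respect to the two canonical forms; this is standard and I would quote the relevant lemma rather than recompute it.

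\emph{Pullback of the action.} The heart of the proof is verifying $\phibf^*\sbf=\sgr$. I would substitute the field relations above into $\sbf=\sbf^0+\sbf^1$ and expand $F_A=F_\Gamma+\text{corrections involving antifields}$, using that $A$ differs from $\Gamma$ only by antifield-dependent terms produced by $G_{BF}$. The key identities to check are:
\begin{itemize}
\item the zero-ghost part reproduces $\sgr^0(\Lambda)$, using $B\equiv -e$ on shell of the antifields;
\item terms linear in $\anti e,\anti\Gamma$ produce $\qgr e$ and $\qgr\Gamma$ via the identities $\lxi^\Gamma=[\ixi,D_\Gamma]$ and the AKSZ-generated coboundary $D_A\boldsymbol\gamma+\tfrac12[\boldsymbol\gamma,\boldsymbol\gamma]$, which upon component expansion yields $D_\Gamma\chi+\ixi F_\Gamma$ exactly;
\item terms quadratic in $\xi$ and $\chi$ reproduce $\tfrac12\iota_{[\xi,\xi]}\anti\xi$ and $\tfrac12\anti\chi([\chi,\chi]-\ixi^2F_\Gamma)$, with the $\ixi^2 F_\Gamma$ contribution arising from expanding the curvature correction to $A$.
\end{itemize}
The nested contractions $\ixi,\ixi^2,\ixi^3$ appearing in $G_{BF}$ are precisely what is needed to convert the purely internal gauge ghost $c$ of the AKSZ side into the combined diffeomorphism-plus-internal ghost structure $(\xi,\chi)$ of the PC side.

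\emph{Main obstacle.} The principal difficulty is bookkeeping: every differentiation of $G_{BF}$ produces antifield-dependent shifts in the primary fields, so when these are substituted into $\langle BF_A\rangle$ and $\langle B^{\wedge 3}\rangle$ one generates a proliferation of cross-terms. I expect the hard step to be organising these cross-terms by ghost number and antifield degree and showing that the spurious contributions cancel, leaving exactly $\sgr$. A clean way to do this is to introduce the AKSZ superfields on the GR side too (as in \cite{A01}) and show that the transformation acts linearly on them after the contractions with $\ixi$ are absorbed, so that the cancellations follow from $(d_M+D_\Gamma)^2=F_\Gamma$ and Cartan's magic formula rather than from brute expansion.
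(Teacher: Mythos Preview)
The paper does not give its own proof of this proposition: it is attributed to \cite{A01}, and only the resulting field redefinitions are recorded in the Remark that follows, equations~\eqref{eq:BFto3Dfull}. Your outline---AKSZ target $\mathcal V[1]\oplus\mathcal V^{\wedge 2}[1]$ with the obvious Hamiltonian, then derive the change of variables from $G_{BF}$ via \eqref{eq:genFunctions}, then check $\phibf^*\sbf=\sgr$ term by term organised by ghost number---is exactly the standard route and is what \cite{A01} does, so there is nothing to contrast at the level of strategy.

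There is, however, a concrete error in your execution of the second step. When you differentiate $G_{BF}$ with respect to $p'=(\anti B,A,\anti\tau,c)$ you omit the prefactor $(-1)^{|p'|}$ from \eqref{eq:genFunctions}. All four components of $p'$ have total parity $1$ (form degree plus ghost number is odd for each), so this factor is $-1$ throughout. Consequently your relations $B=-(e-\ixi\anti\Gamma-\tfrac12\ixi^2\anti\chi)$, $\anti A=-\anti\Gamma$, $\anti c=-\anti\chi$ all carry a spurious overall sign; compare with the paper's \eqref{eq:BFto3Dfull}, where $B=e-\ixi\anti\Gamma-\tfrac12\ixi^2\anti\chi$, $\anti A=\anti\Gamma$, $\anti c=\anti\chi$. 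This is not cosmetic: you then write ``$B\equiv -e$ on shell of the antifields,'' which would make $\langle B F_A\rangle$ reduce to $-\langle e F_\Gamma\rangle$ rather than $+\langle e F_\Gamma\rangle$, and the subsequent matching of $\sgr^0$ and of the $\anti e,\anti\Gamma$ terms would fail. Redo the differentiation with the parity factors restored, check against \eqref{eq:BFto3Dfull}, and the rest of your plan (using $\lxi^\Gamma=[\ixi,D_\Gamma]$ and Cartan's formula to reorganise the cross-terms) will go through.
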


\begin{samepage}
\begin{remark}
We recall the decomposition of the fields that results from this proposition,
keeping $\anti \tau$ implicit to avoid cramping the equations:
   \begin{subequations} \label{eq:BFto3Dfull}
   \begin{align}
   B &= e - \ixi \anti \Gamma -\frac 12 \ixi^2 \anti \chi, & 
   \anti B &= \anti e - \ixi \anti \tau, \\
   A &= \Gamma - \ixi \anti e + \frac 12 \ixi^2\anti \tau, &
   \anti A &= \anti \Gamma, \label{eq:BFto3Dfull:2}\\
   \tau 
   &= - \ixi e + \frac 12 \ixi^2 \anti \Gamma + \frac 13 \ixi^3 \anti \chi, &
   \anti \tau 
   &= \inv e(\anti \xi - \anti e \anti \Gamma + \ixi \anti e \anti \chi), \\
   c &= \chi + \frac 12 \ixi^2 \anti e - \frac 16 \ixi^3 \anti \tau, &
   \anti c &= \anti \chi.
   \label{eq:BFto3Dfull:4}
   \end{align}
   \end{subequations}
\end{remark}
\end{samepage}

\section{BV supergravity}

\begin{definition} \label{def:supergravity}
We define \index{3D!supergravity}\term{3D supergravity} and
\index{$BF$!supergravity} \term{$BF$ supergravity} by extending respectively
$\fgr^0$ and $\fbf^0$ to
   \begin{align}
   &\fgrf^0 = \fgr^0\times\Pi \mathcal S(P) & &\hbox{and} & &\fbff^0 = \fbf^0\times \Pi \mathcal S(P),
   \end{align}
for $\mathcal S(P)$ the space of $1$-forms taking values in the spinor bundle associated
to the principal bundle $P$, and then extending their actions by a
\index{Rarita--Schwinger action} \term{Rarita--Schwinger action} term
   \begin{equation}
   \begin{aligned}
   \sgrf^0(\Lambda) &= \sgr^0(\Lambda) + \int_M \frac 12 \dc \psi\, D_\Gamma \psi, \\
   \sbff^0(\Lambda) &= \sbf^0(\Lambda) + \int_M \frac 12 \dc \varphi\, D_A \varphi.
   \end{aligned}
   \end{equation}
Here the fields $\psi,\varphi \in \Pi \mathcal S(P)$ are spin $\frac 32$ Majorana
spinors.
\end{definition}

\begin{remark}
The connection forms will act on the spinor fields through the spin $\frac 32$
real (Majorana) representation of the algebra $\mathfrak{spin}(2,1)$.
This representation will be
denoted by $\bm \rho$ and, for the Pauli matrices $\{\sigma^a\}_{a=1}^3$, we will
use the shorthand notation $\rho^a \coloneq \bm \rho(\sigma^a)$. Letting $\{v_a\}$
be a basis of the sections of the Minkowski bundle $\mathcal V$, we write
   \begin{equation}
   \rho \coloneq \rho^a v_a,
   \end{equation}   
with which the equations of motion that follow from $\sbff^0$ take the following form:
   \begin{subequations} \label{eq:eqsMotion}
   \begin{align}
   F_A + \frac \Lambda 2 B \wedge B &= 0,\\
   D_A \varphi &= 0, \\
   D_A B + \frac 12 \dc{\varphi}\, \rho \varphi &= 0. \label{eq:eqsMotion:3}
   \end{align}
   \end{subequations}
Meanwhile, the equations of motion issued from $\sgrf^0$ are analogous, after
replacing
   \begin{equation}\label{eq:BFto3D}
   (B, A, \varphi)\ \leftrightarrow\ (e, \Gamma, \psi).
   \end{equation}
\end{remark}

\begin{proposition} \label{prop:AKSZsupergravity}
$BF$ supergravity is an AKSZ theory.
\end{proposition}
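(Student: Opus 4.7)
The plan is to realise $BF$ supergravity as an AKSZ construction by extending the target used for $BF$ gravity in Proposition~\ref{prop:3DBFgravity} by a spinor direction. I would take the source to be the spacetime $M$ and the target to be
\[
\mathcal N = \mathcal V^{\wedge 2}[1] \oplus \mathcal V[1] \oplus \Pi\Sigma[1],
\]
with coordinates $(a, b, \phi)$, where $\Sigma$ is the Majorana spin representation of $\mathfrak{spin}(2,1)$ and the parity shift $\Pi$ is chosen so that the form-degree-$1$ component of the superfield associated to $\phi$ is the Grassmann-odd spinor field $\varphi$. I would equip $\mathcal N$ with the exact degree-$2$ symplectic form $\omega = d\alpha$ of primitive
\[
\alpha = \langle b\, da\rangle + \frac{1}{2}\dc\phi\, d\phi,
\]
and with degree-$3$ Hamiltonian
\[
H = \frac{1}{2}\langle b[a,a]\rangle + \frac{\Lambda}{6}\langle b^{\wedge 3}\rangle + \frac{1}{2}\dc\phi\,\rho(a)\,\phi,
\]
augmenting the data of plain $BF$ gravity by the Majorana pairing on spinors and the minimal coupling of $\phi$ to the connection.

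The central step is to verify the classical master equation $\{H, H\} = 0$, ensuring that $(\mathcal N, \omega, H)$ is a dg-Hamiltonian manifold of degree $2$. Decomposing the bracket as $\{H_{BF}, H_{BF}\} + 2\{H_{BF}, H_\psi\} + \{H_\psi, H_\psi\}$, the first summand vanishes by Proposition~\ref{prop:3DBFgravity}. The self-bracket $\{H_\psi, H_\psi\}$, computed through the $\{\phi,\phi\}$-pairing, collapses via the representation identity $[\rho^I, \rho^J] = \rho([v^I, v^J])$ and the odd parity of $a$ to a term proportional to $\dc\phi\,\rho([a,a])\,\phi$; this is cancelled by the matching contribution of the cross bracket $\{H_{BF}, H_\psi\}$ computed through the $\{a,b\}$-pairing. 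Any residual contribution involving the cosmological constant should close by a standard three-dimensional Fierz-type identity on Majorana bilinears, equivalent to the super-Jacobi identity of the three-dimensional super-(A)dS Lie superalgebra.

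With the master equation established, I would apply the local AKSZ recipe of Remark~\ref{rem:localAKSZ}: the transgression of $\alpha$ yields the kinetic piece $\int_M (\langle B\, dA\rangle + \frac{1}{2}\dc\varphi\, d\varphi)$, and the transgression of $H$ yields $\int_M (\frac{1}{2}\langle B[A,A]\rangle + \frac{\Lambda}{6}\langle B^{\wedge 3}\rangle + \frac{1}{2}\dc\varphi\,\rho(A)\,\varphi)$. After recombining $dA + \frac{1}{2}[A,A] = F_A$ and $d\varphi + \rho(A)\varphi = D_A\varphi$, the degree-$0$, form-degree-$3$ part of the AKSZ action reproduces $\sbff^0(\Lambda)$ exactly. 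The hard part will be the master equation check and the attendant parity bookkeeping, especially because $\phi \in \Pi\Sigma[1]$ is an \emph{even} target coordinate, despite $\varphi$ being fermionic---a fact needed to make $\dc\phi\,\rho^a\,\phi$ non-zero via the symmetry of $C\rho^a$, while $a$ and $b$ remain odd.
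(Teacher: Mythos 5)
Your construction coincides with the paper's proof in every essential: the same source $M$, the same target $V[1]\oplus V^{\wedge 2}[1]\oplus \Pi S(P)[1]$, the same primitive $\alpha_\MN = b\,d_\MN a + \tfrac 12\dc f\,d_\MN f$, and the same Hamiltonian $H_\MN$. The paper in fact stops there, simply exhibiting the dg-Hamiltonian data and declaring $Q_\MN=-\{H_\MN,\bullet\}$, without spelling out either the master-equation check or the transgression; both of your additions are sound for the $\Lambda$-independent part, including the key parity observation that $f$ is an \emph{even} target coordinate so that $\dc f\,\rho^a f\neq 0$ by symmetry of $C\rho^a$.

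The one step that will not go through as you state it is the appeal to a Fierz-type identity for the cosmological term. The residual contribution to $\{H_\MN,H_\MN\}$ is the pairing of $\partial_b\bigl(\tfrac\Lambda 6\langle b^{\wedge 3}\rangle\bigr)\propto\Lambda\, b^{\wedge 2}$ with $\partial_a\bigl(\tfrac 12\dc f\,af\bigr)\propto\dc f\,\rho\, f$, i.e.\ a term proportional to $\Lambda\,\epsilon_{abc}\,b^a b^b\,\dc f\rho^c f$. This is \emph{bilinear}, not quartic, in $f$, so no Fierz rearrangement applies, and it does not vanish: it is exactly the failure of the super-Jacobi identity when only the bosonic bracket is deformed by $\Lambda$. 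Closing it requires either setting $\Lambda=0$ or augmenting $H_\MN$ by a term $\propto\sqrt{\Lambda}\,\dc f\, b\, f$ (passing to the super-AdS algebra), with the corresponding modification of $Q_\MN$. The paper's proof is silent on this point as well, so your proposal is faithful to it; but if you actually carry out the master-equation verification you advertise as "the hard part," this is where it fails to close rather than closing.
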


\begin{proof}
We take the spacetime $M$ as the source manifold, let $V$ be the Minkowski space
that is the typical fibre of the Minkowvski bundle $\mathcal V$, and $S(P)$ be the
vector space associated to the spinor representation $\bm \rho$. We hence define
$(b,a,f)$ as coordinates on the target
   \begin{equation}
   \MN 
   \cong V[1] \oplus V^{\wedge 2}[1] \oplus \Pi S(P)[1].
   \end{equation}
We see from this that $\MN$ is endowed with the symplectic form
   \begin{align}
   \omega_\MN &= d_\MN \alpha_\MN, \label{eq:omegaTarget} \\
   \Omega^1(\MN)\ni \alpha_\MN &= b\,d_\MN a + \frac 12 \dc f\, d_\MN  f,
   \end{align}
the Hamiltonian function
   \begin{equation}
   H_\MN= \Bigl\langle \frac 12 b[a,a] + \frac \Lambda 6 b^3 \Bigr\rangle
   + \frac 12 \dc f\, a  f,
   \end{equation}
and the cohomological vector field $Q_\MN = -\{H_\MN, \bullet\}$ associated to
$H_\MN$ through the Poisson bracket induced by $\omega_\MN$. 
\end{proof}

\begin{construction}[AKSZ extension of $BF$ supergravity]
\label{construct:AKSZsupergravity} Given the target established
in the proof of Proposition \ref{prop:AKSZsupergravity}, the field space is
   \begin{equation}
   \fbff \cong \Omega\bigl(M,\mathcal V\bigr)[1] 
   \oplus \Omega\bigl(M,\mathcal V^{\wedge 2}\bigr)[1] 
   \oplus \Omega\bigl(M, \Pi \mathcal S(P)\bigr)[1].
   \end{equation}
On it, the coordinates are given by the superfields
   \begin{subequations}
   \begin{align}
   \widetilde b &=  \tau + B + \anti A + \anti c, \\
   \widetilde a &=  c + A + \anti B + \anti \tau, \\
   \widetilde  f &=  \gamma + \varphi + \anti \varphi + \anti \gamma,
   \end{align}
   \end{subequations}
where summands are ordered in increasing cohomological degree, from $0$ to $3$,
and decreasing ghost number, from $1$ to $-2$. Besides, note that the parity of
a field differs by $1$ from the parity of its corresponding antifield.
Now, replacing the classical fields in $\sbff^0$ by their associated superfield,
keeping only those terms of cohomological degree $3$ and rearranging them, we
find the BV action for $BF$ supergravity:
   \begin{subequations}\label{eq:SBFF}
   \begin{equation}
   \sbff(\Lambda) = \int_M \Bigl\langle 
      \lbff^0(\Lambda) + \lbff^1(\Lambda) + \lbff^2,
   \Bigr\rangle,
   \end{equation}
where
   \begin{align}
   \lbff^0(\Lambda) &=
      BF_A + \frac \Lambda 6 B^{\wedge 3} 
      + \frac 12 \dc \varphi\, D_A \varphi,
   \\
   \lbff^1(\Lambda) &= 
      \anti B \bigl( [c,B] + D_A\tau \bigr) 
      + \anti A \bigl( D_A c + \Lambda B \tau \bigr)  \\
      &\hspace{35mm}+ \frac 12 \anti c \bigl( [c,c] + \Lambda \tau \tau \bigr)
      + \anti \tau [c,\tau],\notag
   \\
   \lbff^2 &=
      \dc{\gamma}\, \anti B \varphi
      + \frac 12 \dc \gamma \, \anti \tau \gamma
      + \dc{\anti \varphi}\,\bigl( D_A\gamma + c \varphi \bigr)
      + \dc{\anti \gamma}\, c \gamma.
   \end{align}
   \end{subequations}
The symplectic form is given by \eqref{eq:omegaTarget} when we replace the
coordinates by their corresponding superfields and keep only the terms of
ghost number $-1$, resulting in
    \begin{multline}
    \omega_\MN = \int_M \Bigl\langle
    d_\MN B\; d_\MN \anti B 
    + d_\MN A\; d_\MN \anti A 
    + d_\MN \dc \varphi\; d_\MN \anti \varphi \\
    + d_\MN \tau\; d_\MN \anti \tau 
    + d_\MN c\; d_\MN \anti c 
    + d_\MN \dc \gamma\; d_\MN \anti \gamma
    \Bigr\rangle.
    \end{multline}
In turn, we can read off the action \eqref{eq:SBFF} the way in which the
cohomological vector field acts on the coordinate fields:
   \begin{subequations} \label{eq:QBFF}
   \begin{align}
   \qbff(B) & =  [c,B] + D_A\tau + \dc \gamma \, \rho \varphi, &
   \qbff(\tau) & =  [c,\tau] + \frac 12 \dc \gamma \, \rho \gamma, 
   \label{eq:QBFF:1} \\
   \qbff(A) & =  D_A c + \Lambda B \tau, &
   \qbff(c) & =  \frac 12 [c,c] + \frac 12 \Lambda \tau \tau, \\
   \qbff(\varphi) & =  D_A \gamma +  c\varphi, &
   \qbff(\gamma) & =  c\gamma \vphantom{\frac 12}.
   \end{align}
   \end{subequations}
Thus we conclude the construction collecting everything in a tuple 
   \begin{equation}
   \tbff \coloneq (\fbff, \sbff, \qbff, \ombff).
   \end{equation}
\end{construction}

\begin{lemma}
For any degree $1$ vector field $\xi\in\mathfrak X(M)[1]$, principal connection
$1$-form $\Gamma\in\op{Conn}(P)$ and field $\varphi\in\Omega(M,\mathcal V)$
valued in an associated vector bundle $\mathcal V$, the following identity
holds:
   \begin{equation}
   \bigl[\lxi^\Gamma, \ixi\bigr] \varphi = \iota_{[\xi,\xi]}\varphi.
   \end{equation}
\end{lemma}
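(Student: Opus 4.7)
The identity is the graded analogue of the classical Cartan formula $[\mathcal L_v, \iota_w] = \iota_{[v,w]}$ for ordinary vector fields, so my plan is to follow the classical proof while carefully tracking graded signs, reducing the covariant case to the non-covariant one.

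First I would split $D_\Gamma = d + \rho(\Gamma)$, where $\rho(\Gamma)$ denotes the action of $\Gamma$ on sections of the associated bundle by wedging with a $1$-form. Since $\iota_\xi$ has total degree $0$ (because $\xi$ has ghost number $1$), it is a degree-zero graded derivation of the exterior algebra, and the definition $\mathcal L_\xi^\Gamma = [\iota_\xi, D_\Gamma]$ yields
\begin{equation*}
\mathcal L_\xi^\Gamma \varphi = \mathcal L_\xi \varphi + \rho(\iota_\xi \Gamma)\varphi,
\end{equation*}
where $\mathcal L_\xi = [\iota_\xi, d]$ is the non-covariant analogue. A second commutator with $\iota_\xi$, using the derivation property once more, gives
\begin{equation*}
[\mathcal L_\xi^\Gamma, \iota_\xi]\varphi = [\mathcal L_\xi, \iota_\xi]\varphi - \rho(\iota_\xi^2 \Gamma)\varphi.
\end{equation*}
The connection contribution vanishes because $\Gamma$ is a $1$-form and $\iota_\xi^2 \Gamma = 0$ (any two contractions of a $1$-form are zero, since $\iota_{\partial_\mu}\iota_{\partial_\nu}$ kills a $1$-form regardless of the coefficients).

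For the remaining term I would verify the graded Cartan identity $[\mathcal L_\xi, \iota_\xi] = \iota_{[\xi,\xi]}$ directly. Both sides are graded derivations of the exterior algebra, so by Leibniz it suffices to check the identity on $0$- and $1$-forms; on $0$-forms both sides vanish trivially. For a $1$-form $\alpha = \alpha_\mu dx^\mu$, expanding $\mathcal L_\xi = \iota_\xi d - d\iota_\xi$ (the graded commutator carries a minus sign because $\iota_\xi$ is even in total degree) in local coordinates and using $\xi^\mu \xi^\nu = -\xi^\nu \xi^\mu$ for odd $\xi$, the terms involving $\xi\xi\,\partial\alpha$ cancel against $\iota_\xi^2 d\alpha$, leaving $[\mathcal L_\xi, \iota_\xi]\alpha = 2\xi^\mu (\partial_\mu \xi^\nu)\alpha_\nu$. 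This matches $\iota_{[\xi,\xi]}\alpha$, since $[\xi,\xi] = 2\xi^\mu (\partial_\mu \xi^\nu)\partial_\nu$ is the graded self-bracket of the odd vector field $\xi$.

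The main obstacle is keeping graded signs straight. The parity of $\iota_\xi$ is even (total degree $0$), not odd as in the classical setting, so $[\iota_\xi, \iota_\xi] = 0$ as graded commutators even though $\iota_\xi^2 \ne 0$ in general; likewise $\mathcal L_\xi = \iota_\xi d - d\iota_\xi$ carries a minus sign rather than the classical plus sign. Distinguishing the vanishing of the graded self-bracket from the non-vanishing of the composition $\iota_\xi^2$, and consistently using the correct sign in $\mathcal L_\xi$, are the delicate points; once they are handled, the proof parallels the classical one almost verbatim.
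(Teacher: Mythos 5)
Your proof is correct, but it is worth noting that the paper does not actually prove this lemma --- it simply cites \cite[Lemma 9]{A12} --- so the comparison is really with the appendix, where the non-covariant identity $[\ld_X,\iota_Y]\omega=\iota_{[X,Y]}\omega$ \eqref{eq:lieIota} is established by evaluating both sides on a function $f$ and on $df$. Your argument splits $\lxi^\Gamma$ into $\lxi$ plus the algebraic operator $\rho(\ixi\Gamma)$, kills the connection contribution via $\ixi^2\Gamma=0$ (valid since $\Gamma$ is a $1$-form, so $\ixi\Gamma$ is a $0$-form), and then re-derives the special case $X=Y=\xi$ of \eqref{eq:lieIota} by checking it on $0$- and $1$-forms in coordinates; this is the same ``derivations agree on generators'' strategy as the appendix, just with $1$-forms in place of exact $1$-forms, and your coordinate computation (the cancellation of the $\xi\xi\,\partial\alpha$ terms against $\ixi^2 d\alpha$, leaving $2\xi^\mu(\partial_\mu\xi^\nu)\alpha_\nu=\iota_{[\xi,\xi]}\alpha$) checks out. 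You could shorten the argument by simply invoking the already-proven Proposition \eqref{eq:lieIota} for the flat part instead of re-proving it, but your sign bookkeeping --- $\ixi$ even in total degree, hence $\lxi=\ixi d-d\ixi$ and $[\ixi,\ixi]=0$ despite $\ixi^2\neq 0$ --- is exactly the delicate point and you handle it correctly. One small caveat: you state the check ``on $0$- and $1$-forms'' suffices by the Leibniz rule; strictly, what suffices is checking on the multiplicative generators $f$ and $df$, which your $1$-form computation subsumes, so nothing is lost, but the bundle-valued case also requires the (easy) observation that both sides act trivially on the $\mathcal V$-factor once the connection term has been removed.
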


\noindent This is proven in \cite[Lemma 9]{A12}.

\begin{proposition} \label{prop:onShellQGR}
There is a ghost fermion $\varepsilon$ such that on shell 
   \begin{equation}
   \label{eq:onShellQGR}
   \qgrf(\xi)\equiv \qgr(\xi) 
   + \frac 12 \dc \varepsilon\, \inv e(\rho ) \varepsilon.
   \end{equation}
\end{proposition}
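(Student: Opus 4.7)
The strategy is to exploit the (extended) canonical transformation from Theorem \ref{th:BFFtoGRF} in order to transfer the AKSZ transformation law for $\qbff(\tau)$ to the gravity side, and then read off $\qgrf(\xi)$ on shell.

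First, I would observe that the pure-gravity generating function \eqref{eq:genFct0} gives $\tau = -\ixi e + (\text{antifield corrections})$, as recorded in \eqref{eq:BFto3Dfull}, so that on shell $\tau \equiv -\ixi e$. The fermionic extension of the generating function provided by Theorem \ref{th:BFFtoGRF} modifies only the antifield-dependent tail of this relation, so the identity $\tau \equiv -\ixi e$ persists. Likewise $c \equiv \chi$ on shell, and a suitable ghost fermion $\varepsilon$ is identified with $\gamma$ (possibly up to a shift by $\ixi\psi$ dictated by the extended generating function).

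Second, I would apply $\qgrf$ to both sides of $\tau \equiv -\ixi e$. Using the graded Leibniz rule together with the fact that $\xi$ is odd of ghost number one,
\begin{equation*}
   \qgrf(-\ixi e) = -\iota_{\qgrf(\xi)}\, e + \ixi\, \qgrf(e).
\end{equation*}
Since the canonical transformation intertwines $\qbff$ and $\qgrf$, the left-hand side matches $\qbff(\tau) = [c,\tau] + \tfrac{1}{2}\dc\gamma\,\rho\,\gamma$ from \eqref{eq:QBFF:1}, which on shell becomes $-\ixi[\chi,e] + \tfrac{1}{2}\dc\varepsilon\,\rho\,\varepsilon$ (plus cross terms from any shift $\gamma \mapsto \varepsilon + \ixi\psi$).

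Third, I would split the resulting identity into its bosonic and fermionic components. The bosonic component, which involves no spinors, reproduces the pure-gravity identity $\qgr(\xi) = \tfrac{1}{2}[\xi,\xi]$ already contained in Proposition \ref{prop:3DBFgravity}, via the Lemma relating $\iota_{[\xi,\xi]}$ and $[\lxi^\Gamma,\ixi]$. The remaining purely fermionic piece then reads
\begin{equation*}
   \iota_{\qgrf(\xi)-\qgr(\xi)}\, e \equiv -\tfrac{1}{2}\dc\varepsilon\,\rho\,\varepsilon,
\end{equation*}
and applying the inverse triad $\inv e$ to this identity yields exactly $\qgrf(\xi) - \qgr(\xi) \equiv \tfrac{1}{2}\dc\varepsilon\,\inv e(\rho)\,\varepsilon$ up to the standard sign conventions.

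The main obstacle is the correct bookkeeping of the extra fermionic contribution to $\qgrf(e)$: in supergravity one has on shell $\qgrf(e) \equiv \lxi^\Gamma e + [\chi,e] + \dc\psi\,\rho\,\varepsilon$ (as illustrated in the Introduction), so $\ixi\,\qgrf(e)$ generates a residual term $\dc{(\ixi\psi)}\,\rho\,\varepsilon$ that is absent in the pure-gravity calculation. Cancelling this term against the cross contributions of $\tfrac{1}{2}\dc\gamma\,\rho\,\gamma$ expanded via the Majorana flip identity is what pins down the precise identification of $\varepsilon$ inside the extended generating function. Tracking these Majorana cross terms while keeping careful control of the graded signs from the odd operators $\ixi$, $D_\Gamma$, $\lxi^\Gamma$ and $\qgrf$ is the delicate part of the argument.
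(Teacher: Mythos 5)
Your proposal follows essentially the same route as the paper: work on shell, where the generating function gives $(B,A,\tau,c)\equiv(e,\Gamma,-\ixi e,\chi)$, apply $\qgrf$ to $\tau\equiv-\ixi e$, match the result against $\qbff(\tau)=[c,\tau]+\tfrac12\dc\gamma\,\rho\,\gamma$ from \eqref{eq:QBFF:1}, use the Lemma $[\lxi^\Gamma,\ixi]=\iota_{[\xi,\xi]}$ to peel off the bosonic $\tfrac12[\xi,\xi]$, and apply $\inv e$ to the leftover fermion bilinear, landing on $\varepsilon=\gamma+\ixi\psi$ exactly as the paper does. Two minor caveats: the paper proves this proposition \emph{before} Theorem \ref{th:BFFtoGRF} (the proposition is what constrains the extension there), so it assumes only the on-shell identification $(\varphi,\gamma)=(\psi,\kappa)$ rather than the full extended generating function; and the on-shell $\qgrf(e)$ also carries the piece $-\ixi \Dg e\equiv\tfrac12\ixi(\dc\psi\,\rho\,\psi)$ coming from the torsion equation of motion \eqref{eq:eqsMotion:3}, which is precisely one of the cross terms you defer and which is needed to complete $\dc\gamma\,\rho\,\gamma$ into $\dc\varepsilon\,\rho\,\varepsilon$.
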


\begin{proof}
Let us call $Q'\coloneq \qgrf - \qgr$ the on-shell extension of $\qgr$.
Now, on shell all antifields are set to zero, so \eqref{eq:BFto3Dfull} reduces
to
   \begin{equation}
   (B, A, \tau, c) = (e,\Gamma, -\ixi e, \chi),
   \end{equation}
which we extend additionally with $(\varphi, \gamma) =  (\psi, \kappa)$,
thus being able to translate on shell the first part of \eqref{eq:QBFF:1} to 
   \begin{equation}
   \begin{aligned} 
   \qgrf(e) &\equiv -\Dg(\ixi e) + [\chi, e] 
   + \dc \kappa \, \rho \psi \\
   & = -\ixi \Dg e + \lxig e + [\chi, e]
   + \dc \kappa \, \rho \psi. 
   \end{aligned}
   \end{equation}
Meanwhile, the fact that $\abs e = 2$, together with \eqref{eq:LIotaCommutator}
and the definition of $\lxig$, imply that
   \begin{equation}
   \ixi \bigl(\lxig e\bigr) = \frac 12 \bigl([\ixi^2, D_\Gamma] 
   - \iota_{[\xi,\xi]}\bigr) e.
   \end{equation}
Recalling that $\qgr(\xi) = \frac 12 [\xi, \xi]$, we use all this to further
find that 
   \begin{equation}
   \begin{aligned}
   \qgrf (\ixi e) &= [\qgrf, \ixi]e + \ixi (\qgrf e) 
   = \iota_{\qgrf(\xi)} e + \ixi(\qgrf e) \\
   &\equiv \iota_{\qgrf(\xi)} e
      -\ixi^2 \Dg e + \ixi\bigl(\lxig e\bigr) 
      + \ixi [\chi, e] + \dc \kappa \, \rho \ixi \psi \\
   & = \iota_{Q'(\xi)} e + \frac 12 \iota_{[\xi, \xi]} e - \ixi^2 \Dg e
      + \frac 12 \ixi^2 \Dg e - \frac 12 \Dg (\ixi^2 e) \\
      & \hspace{42mm}- \frac 12 \iota_{[\xi,\xi]} e
      + \ixi [\chi,e] + \dc {\kappa} \, \rho \ixi \psi \\
   & = \iota_{Q'(\xi)} e -\frac 12 \ixi^2 \Dg e 
   + \ixi [\chi, e] + \dc \kappa \, \rho \ixi \psi.\\
   \end{aligned}
   \end{equation}
Moreover, after adapting to on-shell 3D supergravity both the equation of motion
\eqref{eq:eqsMotion:3} and the second part of \eqref{eq:QBFF:1}, from what
precedes we deduce that 
   \begin{equation}
   \begin{aligned}
   \qgrf(\ixi e)
   &\equiv \iota_{Q'(\xi)} e + \frac 12 \ixi \dc \psi \, \rho \ixi \psi
   + \ixi [\chi, e] + \dc \kappa \, \rho \ixi \psi \\
   &\equiv -\qgrf(\tau) 
   \equiv \ixi [\chi, e] -\frac 12 \dc \kappa \, \rho \kappa,
   \end{aligned}
   \end{equation}
which holds if and only if
   \begin{equation}
   \iota_{Q'(\xi)} e \equiv -\frac 12 (\dc \kappa + \ixi \dc \psi)\, \rho
   (\kappa + \ixi \psi).
   \end{equation}
With this we finally conclude that 
   \begin{equation}
   \begin{aligned}
   \qgrf(\xi) 
   &= \qgr(\xi) + \inv e\bigl(e(Q'\xi)\bigr)
   = \qgr(\xi) - \inv e\bigl(\iota_{Q' (\xi)} e\bigr) \\
   &\equiv \qgr(\xi) + \frac 12 \dc \varepsilon \, \inv e(\rho) \varepsilon
   \end{aligned}
   \end{equation}
for the ghost Majorana fermion $\varepsilon \coloneq \kappa + \ixi \psi$.
\end{proof}

\begin{remark}
This property is expected, since the generators of supersymmetry square to
translation generators, and the former are to be encoded by ghost fermions while
the latter are realised through $\xi$.
\end{remark}

\begin{remark}
Our current goal being to extend \eqref{eq:genFct0} as to find a BV theory of 3D
supergravity that encodes explicitly both supersymmetry and diffeomorphism
invariance, this last proposition will serve us as guiding principle. Indeed, we
will be searching for a type 2 generating function $G_{\text{$BF$\textPhi}}$ that
decomposes as
   \begin{equation}
   G_{\text{$BF$\textPhi}} = G_{\text{$BF$}} + \egbf,
   \end{equation}
and evidently we would like $\ext[$BF$] G$ to be  a minimal extension, that is, as
simple as possible without being ineffective. This \emph{without being
ineffective} is precisely what Proposition \ref{prop:onShellQGR} addresses:
we must ensure that the extended symplectomorphism $\phibff: \fgrf\to\fbff$
leads to a cohomological vector field that on shell is equal to
\eqref{eq:onShellQGR}. Fortunately, finding such extension is eased by the
next proposition.
\end{remark}

\begin{proposition}\label{prop:extensionIsEasy}
A minimal extension $\egbf$ of $\gbf$ ensuring that equation
\eqref{eq:onShellQGR} holds on shell can only depend on spinorial fields or on
contractions of those with respect to $\xi$.
\end{proposition}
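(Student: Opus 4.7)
The plan is to examine the type-2 relations produced by $G_{\text{$BF$\textPhi}} = \gbf + \egbf$ and to exploit the fact that the bosonic canonical transformation is already fixed by $\gbf$. Since \eqref{eq:genFunctions} is purely algebraic in the derivatives of the generating function, writing $G_{\text{$BF$\textPhi}}$ as a sum implies that every coordinate equation of \eqref{eq:BFto3Dfull} receives an additive correction of the form $\partial \egbf/\partial q^i$ or $\partial \egbf/\partial p'_j$. By Proposition \ref{prop:3DBFgravity}, $\gbf$ already yields the correct $\phibf\colon\tgr \to \tbf$ on the bosonic sector, so any nonzero correction in a purely bosonic coordinate relation must be justified by the super-extension of the theory.

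I would then invoke Proposition \ref{prop:onShellQGR} together with its analogues for the remaining bosonic fields (for instance $\qgrf(e) = \qgr(e) + \dc \psi\, \rho \eps$ at zero bosonic ghosts, as displayed in the Introduction). In every case the supersymmetric correction to $\qgrf$ restricted to a bosonic field is a spinor bilinear built from $\psi, \kappa$ and, on shell, $\varepsilon = \kappa + \ixi \psi$. For the type-2 relations to reproduce such spinor-bilinear corrections, differentiation of $\egbf$ with respect to any coordinate must land in the space of spinor bilinears. This already forbids a term of $\egbf$ depending only on purely bosonic coordinates: differentiating it with respect to a bosonic variable gives a non-spinorial output with no counterpart in the required $\qgrf$-corrections, while differentiating it with respect to a spinor variable gives zero. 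The only way $\xi$-dependence can appear compatibly with the spinor-bilinear structure is through the contractions $\ixi \psi$, $\ixi \kappa$, etc., mirroring the structure of $\varepsilon$.

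The main obstacle I expect is to formalize what \emph{minimal} means and to rule out exotic bosonic terms in $\egbf$ whose derivatives might conspire to yield spinor-bilinear corrections by cross-coupling between variables. The cleanest way to handle this is to write the most general ansatz for $\egbf$ of ghost number $-1$ and top cohomological degree that is invariant under internal and Lorentz symmetries, and to check term by term that only the forms claimed in the statement survive. This is a bookkeeping argument that, while tedious, is conceptually routine and will pin down $\egbf$ as a polynomial in the spinor fields $\psi$, $\anti\kappa$, $\anti\varphi$, $\gamma$ and their $\ixi$-contractions, as asserted.
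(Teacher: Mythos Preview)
Your plan circles the right idea but leaves the actual proof unperformed. Two points:

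First, the argument that ``differentiation of $\egbf$ with respect to any coordinate must land in the space of spinor bilinears'' does not by itself rule out mixed monomials $\dc y\, x\, y'$ with $x$ a product of bosonic fields. Differentiating such a term with respect to a bosonic coordinate does yield a spinor bilinear, so your criterion is satisfied; the obstruction, if any, must come from elsewhere. The paper's first step is cleaner: since $\gbf$ already realises the bosonic canonical transformation, \emph{minimality} simply means that any term with no spinors is forbidden (it would deform $\gbf$), so every term is automatically of the shape $\dc y\, x\, y'$ with $y,y'$ spinorial and $x$ a monomial in non-spinorial fields and contractions $\ixi$.

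Second, what you call ``tedious bookkeeping'' is in fact the whole content of the proposition. The paper carries it out as a two-line linear problem: each non-spinorial building block (including $\ixi$) has a $(\dego,\gh)$-vector among
\[
v_1=(0,1),\quad v_2=(1,0),\quad v_3=(2,-1),\quad v_4=(3,-2),\quad v_5=(-1,1),
\]
and replacing contractions by a bosonic product while keeping $(\dego,\gh)$ fixed amounts to solving $\sum_i k^i v_i=0$ with $k^1,\dots,k^4\in\N$, $k^5\in\Z$. Adding the two coordinate equations gives $k^1+k^2+k^3+k^4=0$, whence all $k^i=0$. That is the entire mechanism forcing $x$ to be a pure power of $\ixi$; you should state and solve this system rather than defer it.
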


\begin{proof}
As before, we denote by $(\varphi, \gamma)$ the spinorial field and ghost in
$\fbff$ and by $(\psi, \varepsilon)$ the corresponding pair on $\fgrf$. Since the
extension we are looking for aims at being minimal, all its terms must be
spinorial scalars, because if any term in $\egbf$ did not include spinors, it
would effectively amount to a modification of $\gbf$ spoiling the known
canonical transformation between 3D and $BF$ gravities in the absence of
fermions. Consequently, all terms in $\egbf$ should take the form
   \begin{equation}
   \dc y  x y'
   \end{equation}
where $y$ and $y'$ are spinors in $\fbff$ and in $\fgrf$ respectively, and $x$
is any product of non-spinorial fields---possibly including contractions---in
either theory. Of course, not any such combination is valid, given that every
such product must have cohomological degree $3$ and ghost number $-1$, and every
internal index must be contracted. In fact, under these constraints there will
be at most two kinds of valid products $\dc y  x y'$. The first kind will have
$x=1$ and an appropriate distribution of contractions $\ixi$, only
consisting---up to redistribution of the $\ixi$---of the following possible
pairs $(y,y')$:
   \begin{equation}
   \begin{aligned}
   & (\anti \gamma, \varepsilon), & 
   & (\anti \varphi, \psi), &
   & (\anti \gamma, \ixi \psi), \\
   & (\ixi \anti \varphi, \anti \psi), & 
   & (\ixi^2 \anti \varphi, \anti \varepsilon), &
   & (\ixi^3 \anti \gamma, \anti \varepsilon),
   \end{aligned}
   \end{equation}
or the analogous pairings exchanging the roles of the fields in $\fgrf$ by those
in $\fbff$ and vice versa. Meanwhile, the second kind of product will have
$x\neq 1$, yet it is evident that any product of this type, to be valid, should
be obtained from a product of the first kind by replacing any number of
contractions with a product $x$ of non-spinorial fields that have the same
cohomological degree and ghost number as the power of $\ixi$ that they are
replacing. In other words, 
   \begin{equation}
   \begin{pmatrix}
   \dego x \\ \gh\;x
   \end{pmatrix}
   =
   \begin{pmatrix}
   -k \\ k
   \end{pmatrix}
   \end{equation}
for some $k\in\N$. Now, every non-spinorial field---including the contraction
$\ixi$---has its pair $(\dego \bullet, \gh \bullet)$ among the following:
   \begin{equation}\label{eq:degoGhVectors}
   v_1 = 
      \begin{pmatrix}
      0 \\ 1
      \end{pmatrix},\ 
   v_2 = 
      \begin{pmatrix}
      1 \\ 0
      \end{pmatrix},\ 
   v_3 = 
      \begin{pmatrix}
      2 \\ -1
      \end{pmatrix},\ 
   v_4 = 
      \begin{pmatrix}
      3 \\ -2
      \end{pmatrix},\ 
   v_5 = 
      \begin{pmatrix}
      -1 \\ 1
      \end{pmatrix},
   \end{equation}
which respectively correspond to the degree pairs of $c$, $A$, $\anti B$, $\anti
c$ and $\ixi$, in that order. The question, then, reduces to solving the simple
equation
   \begin{equation}
   \begin{aligned}
   k^i v_i = 0 & &\text{for} &
   & \{k^i\}_{i=1}^4\subset \N,\ k^5\in\Z,
   \end{aligned}
   \end{equation}
which holds if and only if $k^i=0$ for all $i$. This is equivalent
to saying that any valid product $\dc y  x y$ is of the first kind, that
is, a Dirac product of spinors or of contractions of those with respect to
$\xi$.
\end{proof}

\begin{samepage}
\begin{remark}
\label{rem:extensionIsEasy}
Proposition \ref{prop:extensionIsEasy} facilitates the labour notably by making
$\egbf$ include at most four terms, which moreover will show a convenient property: they
will only fix the spinorial fields and, at most, modify the expression for
$\anti \xi$ as a function of the fields in $\fbff$. Finding an appropriate
extension, then, is letting
   \begin{equation}
   \egbf = \sum_i k_i \dc {y_i} y'_i
   \end{equation}
and determining the (at most) four parameters $k_i$ that will lead to a
$\qgrf$ that on shell satisfies \eqref{eq:onShellQGR} and to an action $\sgrf$
whose classical spinorial part is 
$\frac 12 \psi\, \Dg \psi$.
\end{remark}
\end{samepage}

\begin{theorem}
\label{th:BFFtoGRF} A BV extension of 3D supergravity is provided by
the tuple $\tgrf \coloneq (\fgrf, \sgrf,  \qgrf, \omgrf)$ for
   \begin{subequations}
   \label{eq:BFFtoGRF}
   \begin{align}
   \label{eq:BFFtoGRF:1}
   \fgrf &= \fgr\times T^*[-1]\Omega(M, \Pi \mathcal S(P)), \\
   \label{eq:BFFtoGRF:2}
   \sgrf &= {\phibff}^*\sbff, \\
   \label{eq:BFFtoGRF:3}
   \omgrf &= {\phibff}^*\ombff, \\
   \label{eq:BFFtoGRF:4}
   \qgrf &= \{\bullet, \sgrf\},
   \end{align}
   \end{subequations}
where the Poisson bracket is defined by $\omgrf$ and the canonical
transformation $\phibff$ is generated by
   \begin{equation}
   \gbff = \gbf + \egbf,
   \end{equation}
where $\gbf$ is the generating function \eqref{eq:genFct0} and $\egbf$ is given
as
   \begin{equation}\label{eq:genFct}
   \egbf(q,p') = -\dc{\anti \varphi}\, \psi 
   - \dc{\anti \gamma}\, (\varepsilon - \ixi \psi)
   \end{equation}
for $q\coloneq (e, \anti \Gamma,  \xi, \anti \chi, \psi, \varepsilon)$
and $p'\coloneq (\anti B, A, \anti \tau, c, \anti \varphi, \anti \gamma)$.
\end{theorem}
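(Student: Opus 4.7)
The plan is to follow the roadmap of Remark~\ref{rem:extensionIsEasy}: once a type-$2$ generating function $\gbff = \gbf + \egbf$ with $\egbf$ supported on the spinor sector is specified, the associated graded symplectomorphism $\phibff:\fgrf\to\fbff$ exists automatically, so that \eqref{eq:BFFtoGRF:3} and \eqref{eq:BFFtoGRF:4} become definitions and \eqref{eq:BFFtoGRF:2} transports the BV structure of $\tbff$ (Construction~\ref{construct:AKSZsupergravity}) to $\fgrf$. By Proposition~\ref{prop:3DBFgravity}, the purely non-spinorial part already reproduces BV 3D gravity, so the verification reduces to showing that the specific choice in~\eqref{eq:genFct} yields both (i)~the correct Rarita--Schwinger term at ghost number zero and (ii)~the on-shell relation~\eqref{eq:onShellQGR} for $\qgrf(\xi)$ established in Proposition~\ref{prop:onShellQGR}.

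First I would extract from $\egbf$ the transformation laws for the spinorial variables using~\eqref{eq:genFunctions}. Because $\egbf$ only involves $\psi,\varepsilon$ among the $q$-variables and $\anti\varphi,\anti\gamma$ among the $p'$-variables, every formula in~\eqref{eq:BFto3Dfull} survives unchanged in the bosonic sector, with at most an additional spinor bilinear in the expression for $\anti\tau$, as anticipated in Remark~\ref{rem:extensionIsEasy}. Up to signs fixed by the parities one finds
\begin{align*}
\varphi &= \psi, & \gamma &= \varepsilon - \ixi\psi,
\end{align*}
together with the dual relations expressing $\anti\psi,\anti\varepsilon$ in terms of $\anti\varphi,\anti\gamma$ and their contractions along $\xi$.

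Condition~(i) is then immediate: at ghost number zero every antifield vanishes, so $B=e$, $A=\Gamma$, $\varphi=\psi$, and the fermionic term of $\lbff^0$ in~\eqref{eq:SBFF} pulls back to $\frac{1}{2}\dc\psi\, D_\Gamma \psi$, matching $\sgrf^0$ of Definition~\ref{def:supergravity}. For~(ii) I would observe that the relation $\gamma = \varepsilon - \ixi\psi$ is precisely the inverse of the identification $\varepsilon := \kappa + \ixi\psi$ appearing at the end of the proof of Proposition~\ref{prop:onShellQGR} (with $\kappa$ playing the role of $\gamma$); consequently, transporting the second equation in~\eqref{eq:QBFF:1} through $\phibff$ produces, on shell, exactly the correction $\frac{1}{2}\dc\varepsilon\,\inv e(\rho)\varepsilon$ to $\qgr(\xi)$ required by~\eqref{eq:onShellQGR}.

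The main obstacle is sign and parity bookkeeping: the Grassmann-oddness of the spinors, the ghost number of $\xi$, and the conventions of~\eqref{eq:genFunctions} interact in subtle ways, and one must in particular check that no other admissible Dirac bilinear from the list produced in the proof of Proposition~\ref{prop:extensionIsEasy} is required, i.e.\ that the two-term ansatz~\eqref{eq:genFct} is already both minimal and sufficient. Any further term would either alter the ghost-number-zero spinorial action (contradicting~(i)) or generate spurious contributions to $\qgrf(\xi)$ beyond those dictated by~\eqref{eq:onShellQGR} (contradicting~(ii)). Once these checks are complete, the four defining equations~\eqref{eq:BFFtoGRF:1}--\eqref{eq:BFFtoGRF:4}, together with the identification of their ghost-number-zero content, provide the desired BV extension of 3D supergravity.
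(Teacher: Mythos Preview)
Your proposal is correct and follows essentially the same route as the paper: you extract the spinorial field identifications $\varphi=\psi$, $\gamma=\varepsilon-\ixi\psi$ from $\egbf$, note that the bosonic sector of~\eqref{eq:BFto3Dfull} is untouched (save for the $\anti\tau$ modification), and then verify the two conditions singled out in Remark~\ref{rem:extensionIsEasy}. The only minor difference is that for condition~(ii) you invoke Proposition~\ref{prop:onShellQGR} directly via the match $\varepsilon=\gamma+\ixi\psi\leftrightarrow\varepsilon=\kappa+\ixi\psi$, whereas the paper instead isolates the $\anti\tau$-dependent spinor bilinears in $\lbff$ and expands them explicitly to read off $\qgrf(\xi)$; both arguments are equivalent and lead to the same conclusion.
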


\begin{proof}
This generating function leads to 
   \begin{subequations}
   \begin{align}
   &\varphi = \psi, & &\anti \varphi = \anti \psi - \ixi \anti \varepsilon, \\
   &\gamma = \varepsilon - \ixi \psi, & &\anti \gamma = \anti \varepsilon,
   \end{align}
   \end{subequations}
so following the previous Remark \ref{rem:extensionIsEasy}, we only have
to attend some of the terms in $\sbff$ \eqref{eq:SBFF} to check whether it
produces an extension of classical 3D supergravity. Firstly, since the
Definition \eqref{eq:BFto3Dfull:2} of $A$ in terms of fields in $\fgrf$ remains
unchanged, the expansion of the classical spinorial field gives
   \begin{equation}
   \frac 12 \dc \varphi \, D_A \varphi = \frac 12 \dc \psi \, \Dg \psi
   - \frac 12 \dc \psi \, (\ixi \anti e - \frac 12 \ixi^2 \anti \tau) \psi,
   \end{equation}
so indeed the classical spinorial term is recovered on shell---where, remember,
antifields are set to zero. Secondly, since the only terms mo\-di\-fying $\qgr(\xi)$
are those spinorial terms in $\sbff$ that include a factor of $\anti
\tau$---because only these  depend on $\anti \xi$, specifically through $\inv
e(\anti \xi)$---we only verify the following terms in $\lbff$:
   \begin{equation}
   \frac 12 \dc \psi \, A \psi
   + \dc{\gamma}\, \anti B \varphi
   + \frac 12 \dc \gamma \, \anti \tau \gamma
   + \dc{\anti \varphi}\, A \gamma 
   + \dc{\anti \varphi}\, c \varphi 
   + \dc{\anti \gamma}\, c \gamma.
   \end{equation}
After expansion---which is rendered explicit in Definition \ref{def:SGRF}
below---one verifies that 
   \begin{equation}
   \qgrf(\xi) = 
   \frac 12 [\xi, \xi] 
   + \frac 12 \dc \eps \, \inv e (\rho) \eps + \cdots
   \end{equation}
omitting all terms that contain antifields, so indeed $\qgrf$ satisfies
\eqref{eq:onShellQGR}. Finally, \eqref{eq:genFct} holds necessarily, since
\eqref{eq:BFFtoGRF:1} merely accounts for the incorporation of fermions, while
equations \eqref{eq:BFFtoGRF:2} to \eqref{eq:BFFtoGRF:4} follow from the
definition of a canonical transformation and the fact that $\gbff$ is a
generating function. Therefore, we have constructed a theory $\tgrf$ that is
symplectomorphic to $BF$ supergravity, and moreover $\tgrf$ produces classical 3D
supergravity on shell; in other words, $\tgrf$ is a BV extension of 3D
supergravity.
\end{proof}

\begin{remark}
Due to the fact that it only involves spinors and their contraction,
the only equation in \eqref{eq:BFto3Dfull} that the extension $\egbf$ modifies
is the one corresponding to $\anti \tau$, giving
   \begin{equation} \label{eq:tauFGRF}
   \anti \tau = \inv e \bigl(
   \anti \xi 
   - \anti e \anti \Gamma 
   - \dc {\anti \gamma} \psi 
   + \ixi \anti e \anti \chi 
   \bigr).
   \end{equation}
\end{remark}

\noindent Having established this result, we find ourselves in a position to
give an explicit form for 3D supergravity.

\begin{definition} \label{def:SGRF} We will call 3-dimensional
\index{BV!supergravity} \term{BV supergravity} the theory built in Theorem
\ref{th:BFFtoGRF}. Its action is given by 
   \begin{subequations} \label{eq:SGRF}
   \begin{equation} 
   \sgrf = \sgr + \int_M \Bigl\langle \lgrf^1 \Bigr\rangle, 
   \end{equation}
for the density
   \begin{equation}
   \lgrf^1 = 
   \frac 12 \dc \psi \, \Dg \psi 
   + \dc \psi \, \anti e \varepsilon
   + \frac 12 \dc \eps \, \anti \tau \varepsilon
   + \dc {\anti \psi} \, \qgrf(\psi)
   + \dc {\anti \eps} \, \qgrf(\eps),
   \label{eq:LGRF}
   \end{equation}
   \end{subequations}
where $\anti \tau$---as given in \eqref{eq:tauFGRF}---is kept implicit for the
sake of readability. In turn, the cohomological vector field decomposes as
   \begin{subequations}
   \begin{equation}
   \qgrf = \qgr + \ext[GR]{Q},
   \end{equation}
for an extension that acts in the following manner:
   \def\interRow{0.5em}
   \begin{align} \label{eq:QGRF}
   &\ext[GR] Q(\Gamma) = \ext[GR] Q(\chi) = 0,\\ 
   &\ext[GR]Q(e) = 
   \dc \psi \, \rho \eps 
   - \ixi \dc {\anti \psi} \, \rho \kappa
   - \frac 12 \ixi^2 \dc {\anti \psi} \, \rho \psi \\
   & \pushright{-\ixi \dc {\anti \psi} \, \rho \ixi \psi
   + \frac 12 \ixi^2 \dc {\anti \eps} \, \rho (\eps - 2 \ixi \psi),}
   \raisetag{0mm} \notag \\
   &\ext[GR]Q(\xi) =
   \frac 12 \dc \varepsilon \, \esig \varepsilon
   + \frac 12 \ixi^2 \dc {\anti \psi} \, \esig \eps
   - \frac 16 \ixi^3 \dc {\anti \eps} \, \esig (2\eps - 3 \ixi \psi),%
   \raisetag{0mm} \\ 
   &\qgrf(\psi) = 
   \chi \psi
   + \Dg \kappa  
   - \ixi \anti e \kappa 
   + \frac 12 \ixi^2 \anti e \psi
   + \frac 12 \ixi^2 \anti \tau \kappa 
   - \frac 16 \ixi^3 \anti \tau \psi,  \\
   &\qgrf(\eps) = 
   \chi \eps
   + \ixi \Dg \kappa
   - \frac 12 \ixi^2 \anti e (\eps - 2 \ixi \psi)
   + \frac 16 \ixi^3 \anti \tau (2\eps - 3 \ixi \psi), 
   \end{align}
   \end{subequations}
writing $\kappa \coloneq \kap$ and $\esig \coloneq \inv e (\rho)$.
\end{definition}

\appendix 

\section{Selected results in graded geometry}

\begin{proposition}
As their non-graded counterparts, graded Lie derivatives satisfy
   \begin{equation}
   [\ld_X, \ld_Y] = \ld_{[X,Y]}\quad\forall\ X,Y\in \mathfrak X(\M).
   \end{equation}
\end{proposition}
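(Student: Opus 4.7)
The plan is to reduce the identity to its action on generators of the de Rham algebra $\Omega(\mathcal M)$, exploiting the fact that both $[\ld_X,\ld_Y]$ and $\ld_{[X,Y]}$ are graded derivations of the same total degree $|X|+|Y|$. Since $\Omega(\mathcal M)$ is generated as a graded commutative algebra by functions $f\in C^\infty(\mathcal M)$ and their differentials $df$, it will suffice to check equality on these two classes of elements.

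First I would recall the graded Cartan magic formula $\ld_X=[d,\iota_X]$ (with $[\,\cdot\,,\,\cdot\,]$ denoting the graded commutator of operators) and derive as a consequence the identity $[\ld_X,d]=0$. The latter follows by a single application of graded Jacobi to $[[d,\iota_X],d]$ combined with $d^2=0$, which is exactly the statement $[d,d]=0$ for the odd operator $d$. This commutativity $[\ld_X,d]=0$ will be the only structural input needed to transfer information from functions to $1$-forms.

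Next I would verify the identity on generators. On a function $f$, the definition of the action of a vector field gives $\ld_X f=X(f)$, so
\begin{equation*}
   [\ld_X,\ld_Y]f
   = X(Y(f))-(-1)^{|X||Y|}Y(X(f))
   = [X,Y](f)
   = \ld_{[X,Y]}f,
\end{equation*}
which is nothing more than the defining formula for the graded bracket of vector fields as their graded commutator of derivations. On an exact $1$-form $df$, the commutativity $[\ld_X,d]=0$ just established yields
\begin{equation*}
   [\ld_X,\ld_Y](df)
   = d\bigl([\ld_X,\ld_Y]f\bigr)
   = d\bigl(\ld_{[X,Y]}f\bigr)
   = \ld_{[X,Y]}(df).
\end{equation*}

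Finally, since both $[\ld_X,\ld_Y]$ and $\ld_{[X,Y]}$ are graded derivations of $\Omega(\mathcal M)$ of degree $|X|+|Y|$ agreeing on a generating set, they coincide on all of $\Omega(\mathcal M)$, which gives the claim. The only delicate point, and hence the main obstacle, is the bookkeeping of signs in the graded commutators: one must use the graded Jacobi identity with the correct Koszul signs to conclude $[\ld_X,d]=0$ and to recognise that $\ld_X f=X(f)$ extends as a derivation with signs matching those in the graded commutator $[X,Y]=XY-(-1)^{|X||Y|}YX$. This is routine once the conventions of the appendix are in place.
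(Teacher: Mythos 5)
Your argument is correct. The paper itself offers no proof of this proposition beyond the remark that it is ``a direct consequence of their definition,'' so there is nothing to clash with; what you write is exactly the standard argument, and it mirrors verbatim the strategy the paper \emph{does} spell out for the neighbouring proposition $[\ld_X,\iota_Y]=\iota_{[X,Y]}$, namely checking the identity of two graded derivations on the generators $f$ and $df$ of $\Omega(\M)$. Two small sign points, both of which you correctly anticipate as ``bookkeeping'': first, you write Cartan's formula as $\ld_X=[d,\iota_X]$ whereas the paper's convention is $\ld_X=[\iota_X,d]$; these differ by $(-1)^{\abs X}$, and only the paper's ordering is compatible with your later use of $\ld_Xf=X(f)$, so you should fix one convention throughout. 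Second, $[\ld_X,d]=0$ is a \emph{graded} commutator statement, so $\ld_X\circ d=(-1)^{\abs X}d\circ\ld_X$; consequently the intermediate equality $[\ld_X,\ld_Y](df)=d\bigl([\ld_X,\ld_Y]f\bigr)$ carries a suppressed factor $(-1)^{\abs X+\abs Y}$, but the identical factor appears in $\ld_{[X,Y]}(df)=(-1)^{\abs X+\abs Y}d\bigl(\ld_{[X,Y]}f\bigr)$, so the two ends of your chain are genuinely equal and the conclusion stands.
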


\begin{remark}
This is a direct consequence of their definition.
\end{remark}

\begin{proposition}
\term{Cartan's identity}---or Cartan's \term{magic formula}---relates the graded
interior derivative, the exterior derivative, and the graded Lie
derivative:\index{Cartan's identity/magic formula}
   \begin{equation}\label{eq:cartanIdentity}
   [\iota_X, d] = \ld_X.
   \end{equation}
\end{proposition}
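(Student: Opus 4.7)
The plan is to prove \eqref{eq:cartanIdentity} by the standard ``derivation and generators'' technique: show that both sides are graded derivations of $\Omega(\M)$ of the same total degree, then check that they agree on a generating subset of the exterior algebra.

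First, I would verify the degrees. The contraction $\iota_X$ acts on forms with total degree $|X|-1$, since it carries the internal degree of $X$ while lowering form degree by one, and $d$ has total degree $+1$; hence the graded commutator
\[
[\iota_X, d] = \iota_X d - (-1)^{(|X|-1)\cdot 1} d \iota_X = \iota_X d + (-1)^{|X|} d \iota_X
\]
has total degree $|X|$, which matches that of $\mathcal L_X$. Both $\iota_X$ and $d$ are graded derivations of $\Omega(\M)$, and the graded commutator of graded derivations is again a graded derivation, so $[\iota_X, d]$ is a graded derivation of degree $|X|$; the same holds for $\mathcal L_X$ directly from its Leibniz property.

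Next, since locally $\Omega(\M)$ is generated as a graded-commutative algebra by $\smooth{\M}$ together with the exact $1$-forms $\{df : f \in \smooth{\M}\}$, it suffices to verify \eqref{eq:cartanIdentity} on these generators. On a function, $[\iota_X, d] f = \iota_X(df) = X(f) = \mathcal L_X f$ because $\iota_X$ annihilates $0$-forms. On an exact $1$-form, $d^2 = 0$ yields
\[
[\iota_X, d](df) = (-1)^{|X|} d\, \iota_X(df) = (-1)^{|X|} d(Xf),
\]
which must be matched with $\mathcal L_X(df) = d(\mathcal L_X f) = d(Xf)$, using the graded commutation of $\mathcal L_X$ with $d$. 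Once agreement on generators is established, the derivation property propagates the identity to all of $\Omega(\M)$.

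The main obstacle will be sign bookkeeping: the sign of the graded commutator depends on the parity of $|X|$, and the graded commutation $[\mathcal L_X, d] = 0$ used in the exact-$1$-form check must be handled with the same sign conventions used throughout the paper. To dispatch these ambiguities robustly, I would perform the final agreement check in local coordinates $(x^i)$ on $\M$, expanding $X = X^i \partial_i$ as a graded vector field and evaluating both sides on the generators $x^i$ and $dx^i$, being careful about the sign produced when the coefficient $X^i$ is passed across $\partial_i$ and $d$. The coordinate computation simultaneously establishes the auxiliary fact $[\mathcal L_X, d] = 0$, closing the argument.
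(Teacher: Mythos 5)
The paper offers no actual proof here---only the remark that the classical argument goes through ``keeping track of the gradings''---and your proposal is precisely that classical generators-and-derivations argument with the gradings made explicit, so the approach matches; it is correct. The one sign you must get right (and which you correctly flag as the crux) is that in the graded setting the commutation $[\mathcal L_X, d]=0$ is a \emph{graded} commutator, so $\mathcal L_X(df) = (-1)^{\abs X}\, d(Xf)$ rather than $d(Xf)$ as displayed in your check on exact $1$-forms, which is exactly what makes it agree with $[\iota_X,d](df) = (-1)^{\abs X} d(Xf)$.
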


\begin{remark}
The proof is verbatim the one used in conventional differential geometry, but
keeping track of the gradings.
\end{remark}

\begin{proposition}
For any graded vector fields $X,Y\in\mathfrak X(\M)$ and form
$\omega\in\Omega(\M)$ over a graded manifold $\M$ we have
   \begin{subequations}\label{eq:LIotaCommutator}
   \begin{equation}\label{eq:lieIota}
   [\ld_X, \iota_Y]\omega = \iota_{[X,Y]}\omega
   \end{equation}
   \end{subequations}
\end{proposition}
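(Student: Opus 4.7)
The plan is to observe that both $[\ld_X, \iota_Y]$ and $\iota_{[X,Y]}$ are graded derivations of the exterior algebra $\Omega(\M)$ of the same total degree $\abs{X}+\abs{Y}-1$, so the identity can be verified on any set of algebra generators. First, I would check the derivation property and the matching of degrees: $\ld_X$ is a graded derivation of degree $\abs{X}$ and $\iota_Y$ one of degree $\abs{Y}-1$, hence their graded commutator is a derivation of degree $\abs{X}+\abs{Y}-1$; on the other side, $\iota_{[X,Y]}$ is a derivation of degree $\abs{[X,Y]}-1 = \abs{X}+\abs{Y}-1$, since the graded Lie bracket of vector fields adds their degrees.

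Next I would exploit the fact that, locally, $\Omega(\M)$ is generated as a graded algebra by smooth functions $f$ and exact one-forms $df$, so that any graded derivation is determined by its values on these generators. On a function $f$ both sides vanish trivially: $\iota_Y f = 0$, $\ld_X f = X(f)$ is again a function and is thus killed by $\iota_Y$, and $\iota_{[X,Y]} f = 0$. On $df$ I would apply Cartan's identity \eqref{eq:cartanIdentity} to write $\ld_X(df) = d(Xf)$, combine this with the definition $\iota_Z(df) = Z(f)$ valid for any graded vector field $Z$, and recognise the resulting expression as the action of the graded Lie bracket $[X,Y]$ on $f$, which is exactly $\iota_{[X,Y]}(df)$.

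The main obstacle is sign bookkeeping: the graded commutator $[\ld_X, \iota_Y]$ introduces a Koszul sign $(-1)^{\abs{X}(\abs{Y}-1)}$, while the graded Lie bracket of vector fields carries the sign $(-1)^{\abs{X}\abs{Y}}$. Reconciling these amounts to being careful with the convention by which a graded vector field $Z$ acts on functions (that is, the precise sign in the identification $Z(f) = \iota_Z\, df$) and then tracking how it propagates through Cartan's identity. Once the conventions are fixed consistently, the check on $df$ closes immediately, and the derivation argument upgrades it to all of $\Omega(\M)$. A slicker but less elementary alternative would be to use Cartan to write $[\ld_X, \iota_Y] = [[\iota_X, d], \iota_Y]$ and invoke the graded Jacobi identity together with $[\iota_X, \iota_Y] = 0$ and $[\ld_X, d] = 0$; this avoids choosing generators but presupposes those graded commutation relations.
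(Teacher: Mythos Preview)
Your proposal is correct and follows essentially the same approach as the paper: both reduce the identity to checking it on functions and on exact one-forms via the derivation property, and both use Cartan's identity to handle the $df$ case, with the sign bookkeeping you anticipate being exactly what the paper carries out explicitly. Your alternative via the graded Jacobi identity is a valid side remark but not the route the paper takes.
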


\begin{proof}
The behaviour of the interior and Lie derivatives, as for any derivation, will
be fully determined by their action on an arbitrary function $f \in \smooth{\M}$
and on its differential $df$. Now, on the one hand it is obvious that
   \begin{align}
   [\ld_X, \iota_Y]f = \iota_{[X,Y]}f = 0.
   \end{align}
On the other hand, recalling that $\abs{\iota_Y} = \abs Y - 1$, Cartan's
identity and $d^2 = 0$ together imply that
   \begin{subequations}
   \begin{equation}
   \begin{aligned}
   \iota_Y (\ld[X] df) 
   &= \iota_Y ([\iota_X, d] df)
   = -\sign{\abs {\iota_X}} \iota_Y d(Xf) \\
   &= \sign{\abs X} Y(X f).
   \end{aligned}
   \end{equation}
This, jointly with $\abs{\ld[X]} =\abs X$, lets us conclude that
   \begin{equation}
   \begin{aligned}
   [\ld[X], \iota_Y] df 
   &= \ld[X] (\iota_Y df) - \sign{\abs {\ld[X]} \abs {\iota_Y}} \iota_Y (\ld[X] df) \\
   &= X (Y f) - \sign{\abs X (\abs Y - 1)}\sign{\abs X} Y (X f)
   = [X, Y] f \\
   &= \iota_{[X, Y]} df,
   \end{aligned}
   \end{equation}
   \end{subequations}
thus proving the proposition.
\end{proof}

\section*{Conflict of interest and data availability}
On behalf of all authors, the corresponding author states that there is no conflict of interest and that the manuscript has no associated data.

\printbibliography
\end{document}